\providecommand{\U}[1]{\protect\rule{.1in}{.1in}}
\newtheorem{theorem}{Theorem}
\newtheorem{corollary}[theorem]{Corollary}
\newtheorem{lemma}[theorem]{Lemma}
\newtheorem{remark}[theorem]{Remark}
\newenvironment{proof}[1][Proof]{\noindent\textbf{#1.} }{\ \rule{0.5em}{0.5em}}
\numberwithin{equation}{section}
\begin{document}

\title{\textbf{Approximate reversal of quantum Gaussian dynamics}}
\author{Ludovico Lami\thanks{F\'isica Te\`orica: Informaci\'o i Fen\`omens Qu\`antics,
Departament de F\'isica, Universitat Aut\`onoma de Barcelona, ES-08193
Bellaterra (Barcelona), Spain}
\and Siddhartha Das\thanks{Hearne Institute for Theoretical Physics, Department of
Physics and Astronomy, Louisiana State University, Baton Rouge, Louisiana
70803, USA}
\and Mark M. Wilde \footnotemark[2] \thanks{Center for Computation and Technology,
Louisiana State University, Baton Rouge, Louisiana 70803, USA}}
\date{\today}
\maketitle

\begin{abstract}
Recently, there has been focus on determining the conditions under which the
data processing inequality for quantum relative entropy is satisfied with
approximate equality. The solution of the exact equality case is due to Petz,
who showed that the quantum relative entropy between two quantum states stays
the same after the action of a quantum channel if and only if there is a
\textit{reversal channel} that recovers the original states after the channel
acts. Furthermore, this reversal channel can be constructed explicitly and is
now called the \textit{Petz recovery map}. Recent developments have shown that
a variation of the Petz recovery map works well for recovery in the case of
approximate equality of the data processing inequality. Our main contribution
here is a proof that bosonic Gaussian states and channels possess a particular
closure property, namely, that the Petz recovery map associated to a bosonic
Gaussian state $\sigma$ and a bosonic Gaussian channel $\mathcal{N}$ is itself
a bosonic Gaussian channel. We furthermore give an explicit construction of
the Petz recovery map in this case, in terms of the mean vector and covariance
matrix of the state $\sigma$ and the Gaussian specification of the channel
$\mathcal{N}$.

\end{abstract}

\section{Introduction}

\subsection{Introduction to recoverability in quantum information}

\label{sec:backgr-recover}

Strong subadditivity of quantum entropy is one of the cornerstones of quantum
information theory, on which many fundamental results rely. Defining the
conditional mutual information of a tripartite state $\rho_{ABC}$ as%
\begin{equation}
I(A;B|C)_{\rho}:=S(AC)_{\rho}+S(BC)_{\rho}-S(ABC)_{\rho}-S(C)_{\rho},
\end{equation}
where $S(G)_{\sigma}\equiv-\operatorname{Tr}[\sigma_{G}\log\sigma_{G}]$ is the
quantum entropy of a state $\sigma_{G}$ of a system $G$, strong subadditivity
is equivalent to the non-negativity of conditional mutual
information:\ $I(A;B|C)_{\rho}\geq0$. Initially conjectured in 1967
\cite{robinson67,lanford68}, it was subsequently proven six years later
\cite{lieb73a,lieb73b}. Afterward, its equivalence to the data processing
inequality for the quantum relative entropy~\cite{U62} was realized
\cite{uhlmann73,lindblad74,Lindblad1975,ruskai02}. This latter inequality has
the form
\begin{equation}
D(\rho\Vert\sigma)\geq D(\mathcal{N}(\rho)\Vert\mathcal{N}(\sigma
)),\label{eq:rel-ent-mono}%
\end{equation}
being valid for all states $\rho,\sigma$ and all quantum channels
$\mathcal{N}$ (completely positive, trace-preserving maps). Here, the quantum
relative entropy is defined for quantum states $\rho$ and $\sigma$ as%
\begin{equation}
D(\rho\Vert\sigma)\equiv\operatorname{Tr}[\rho(\log\rho-\log\sigma)],
\end{equation}
whenever the support of $\rho$ is contained in the support of $\sigma$, and it
is set to $+\infty$ otherwise~\cite{U62}.

The interest in strong subadditivity has not fallen over time, and many
different proofs for it have been proposed in the last four decades (see, for
instance,~\cite{nielsen04}). At the same time, new improvements of the original
inequality have recently been found. Extending methods originally proposed in
\cite{effros09}, an operator generalization of strong subadditivity was
recently proven in~\cite{kim12}.

A line of research which is of particular interest to us focuses on
investigating the conditions under which strong subadditivity, or more
generally the data processing inequality for relative entropy, is satisfied
with equality or approximate equality. The solution of the exact equality case
dates back to the 1980s: in~\cite{petz86,petz88,petz03}, it was shown that the
relative entropy between two states stays the same after the action of a
quantum channel if and only if there is a \textit{recovery channel} bringing
back both images to the original states. Furthermore, this reversing channel
can be constructed explicitly and now takes the name \textit{Petz recovery
map}. Afterward,~\cite{Mosonyi2004,M05} proved a structure theorem giving a
form for states and a channel saturating the data-processing inequality for
relative entropy, and, related to this development, the form of tripartite
states satisfying strong subadditivity with equality was determined in
\cite{hayden03}.

Characterising the structure of states for which strong subadditivity is
nearly saturated requires different techniques, and progress was not made
until more recently.
In 2011, a lower bound on conditional mutual information in terms of one-way
LOCC norms~\cite{MWW09} was proven in~\cite{brandao11}, the motivation for
\cite{brandao11} lying in the question of faithfulness of an entanglement
measure called squashed entanglement~\cite{christandl04} (see also
\cite{tucci99,tucci02} for discussions related to squashed entanglement).
Later on, a conjecture put forward in~\cite{winter12} proposed another
operationally meaningful remainder term for the relative entropy decrease
induced by a quantum channel, given by the relative entropy between the state
$\rho$ and a \textquotedblleft recovered version\textquotedblright\ of
$\mathcal{N}(\rho)$. The authors of~\cite{winter12} proposed the following
conjecture as a refinement of~\eqref{eq:rel-ent-mono}:
\begin{equation}
D(\rho\Vert\sigma)\overset{?}{\geq}D(\mathcal{N}(\rho)\Vert\mathcal{N}%
(\sigma))+D(\rho\Vert(\mathcal{R}_{\sigma,\mathcal{N}}\circ\mathcal{N}%
)(\rho))\,, \label{conj VV}%
\end{equation}
where $\mathcal{R}_{\sigma,\mathcal{N}}$ should be a quantum channel depending
only on $\sigma$ and $\mathcal{N}$ and such that $(\mathcal{R}_{\sigma
,\mathcal{N}}\circ\mathcal{N})(\sigma)=\sigma$. The authors of~\cite{winter12}
proved~\eqref{conj VV} in the classical case, when the states $\rho$ and
$\sigma$ commute and the channel is classical as well, and they showed how the
recovery channel in this case can be taken as the Petz recovery map. This
conjecture has now been proven in a number of special, yet physically relevant
cases as well~\cite{AWWW15,BDW16,ML16,LW16,AW16}. Unfortunately, the authors of
\cite{winter12} showed that in the general quantum case, $\mathcal{R}%
_{\sigma,\mathcal{N}}$ in~\eqref{conj VV} cannot be taken as the Petz recovery
map, and most recently, a counterexample to~\eqref{conj VV} has been reported in~\cite{FF17}, so that~\eqref{conj VV} cannot hold generally. For further details, see also~\cite{kim13,li14}, and for related
conjectures, see~\cite{BSW14,SBW14}.

While the general form of the conjecture in~\eqref{conj VV} is not true~\cite{FF17},
in~\cite{fawzi14}, it was shown that if the conditional mutual information
$I(A;B|C)_{\rho}$ is small, then the state $\rho_{ABC}$ can be very well
approximated by one of its \textquotedblleft reconstructed\textquotedblright%
\ versions $\mathcal{R}_{C\rightarrow BC}(\rho_{AC})$. That is, the authors of
\cite{fawzi14} proved the following inequality:
\begin{equation}
I(A;B|C)_{\rho}\,\geq\,-\log F(\rho_{ABC},\mathcal{R}_{C\rightarrow BC}%
(\rho_{AC}))\,, \label{fawzi-renner}%
\end{equation}
where $F$ denotes the quantum fidelity~\cite{U76}, defined as $F(\omega
,\tau):=\Vert\sqrt{\omega}\sqrt{\tau}\Vert_{1}^{2}$ for quantum states
$\omega$ and~$\tau$, and $\mathcal{R}_{C\rightarrow BC}$ is a recovery channel
taking an input system $C$ to output systems $BC$. Furthermore, the channel
$\mathcal{R}_{C\rightarrow BC}$ can be taken as the Petz recovery map up to
some unitary rotations preceding and following its action, but note that the
unitary rotations given in~\cite{fawzi14} generally depend on the full state
$\rho_{ABC}$.

After the result of~\cite{fawzi14} appeared, much activity surrounding entropy
inequalities and recovery channels occurred. An alternative and simpler proof
of the faithfulness of squashed entanglement following the lines of
\cite{winter12} immediately appeared~\cite{li14}, while an alternative proof
of~\eqref{fawzi-renner} that makes use of quantum state redistribution
\cite{DY08,YD09} appeared in~\cite{brandao14}. In~\cite{sutter15a}, an
important particular case of~\eqref{fawzi-renner} was proven; that is, it was
shown that the recovery map in~\eqref{fawzi-renner} can be chosen to depend
only on $\rho_{BC}$ and to obey $\mathcal{R}_{C\rightarrow BC}(\rho_{C}%
)=\rho_{BC}$. A different approach was delivered in~\cite{wilde15}, based on
the methods of complex interpolation~\cite{BL76} and generalized R\'{e}nyi
entropies~\cite{BSW14,SBW14}. The main result of~\cite{wilde15} states that a
lower bound on the decrease in relative entropy induced by a quantum channel
is given by the negative logarithm of the fidelity between the first state and
its recovered version, which is a step closer to the proof of the conjecture
in~\eqref{conj VV}. However, the recovery term in~\cite{wilde15} is weaker
than the right-hand side of~\eqref{conj VV}, and the map appearing in it lacks
one of the two properties that it is required to obey. Another step toward the
proof of the conjecture in~\eqref{conj VV} was performed in~\cite{junge15},
where a more general tool from complex analysis~\cite{hirschman52} and the
methods of~\cite{BSW14,SBW14,wilde15} were exploited in order to prove a
statement similar to~\eqref{conj VV}, with the relative entropy on the
right-hand side substituted by a negative log-fidelity, but with the recovery
map depending only on $\sigma$ and $\mathcal{N}$ and furthermore satisfying
$\mathcal{R}_{\sigma,\mathcal{N}}(\mathcal{N}(\sigma))=\sigma$. Meanwhile, a different proof approach based on pinching was delivered in~\cite{sutter15b}, and then a
systematic method for deriving matrix inequalities by forcing the operators to
commute via the application of suitably chosen \textquotedblleft pinching
maps\textquotedblright\ was proposed in~\cite{sutter16}. This method as well
as the complex interpolation techniques in~\cite{dupuis15} can be also applied
to prove multioperator trace inequalities~\cite{dupuis15, sutter16, wilde16},
which generalise the celebrated Golden-Thompson inequality $\operatorname{Tr}%
[e^{X+Y}]\leq\operatorname{Tr}[e^{X}e^{Y}]$ ($X,Y$ hermitian) and the stronger
statements given in~\cite{lieb73c}. The results of~\cite{sutter16} also marked
further progress toward establishing the conjecture in~\eqref{conj VV}.

\subsection{Introduction to quantum Gaussian states and channels}

A major platform for the application of quantum information theory to physical
information processing is constituted by quantum optics~\cite{GK04} with a
finite number of electromagnetic modes or quantum harmonic oscillators. From
the mathematical perspective, this framework can be thought of as quantum
mechanics applied to separable Hilbert spaces endowed with a finite number of
operators obeying canonical commutation relations~\cite{S17}.

A typical free Hamiltonian of such a system is quadratic in the canonical
operators, and in fact, a special role within this context is played by ground
or thermal states of such Hamiltonians, commonly called \textit{Gaussian
states}. These states define a useful operational framework for several
reasons, stemming from both physics and mathematics~\cite{adesso14,S17}. From
the physical point of view, they are easily produced and manipulated in the
laboratory and can be used to implement effective quantum protocols
\cite{BR04,Wang20071}. Mathematically convenient properties that qualify them
as defining a legitimate framework include

\begin{enumerate}
\item the closure under so-called Gaussian unitary evolutions, that is,
unitaries induced by piecewise time evolution via quadratic Hamiltonians, as
well as more generally

\item the closure under Gaussian channels, which can be understood as the
operation of adding an ancillary system in a vacuum state, applying a global
Gaussian unitary, and tracing out one of the subsystems~\cite{CEGH08}.
\end{enumerate}

\noindent Recently, more advanced \textquotedblleft closure\textquotedblright%
\ properties have been established, such as the optimality of Gaussian states
for optimising the output entropy of one-mode, phase-covariant quantum
channels, even when a fixed value of the input entropy is prescribed
\cite{V13,jack16a,jack16b,jack16c}. These facts have the striking implication
that it suffices to select coding strategies according to Gaussian states in
order to achieve optimal rates in several quantum communication tasks
\cite{GGLMSY04,WHG11,GPCH13,qi16,WQ16,jack16c}.

\subsection{Summary of main result}

The main contribution of our paper is a proof that Gaussian states and
channels possess another closure property: the Petz recovery map associated to
a Gaussian state $\sigma$ and a Gaussian channel $\mathcal{N}$ is itself a
Gaussian channel (see Theorem~\ref{thm:main}). Additionally, we achieve this
result through an explicit construction of the action of such a Gaussian Petz
channel, which lends itself to multiple applications. For instance, with the
formulas we provide, it is possible to construct a counterexample to the inequality in
\eqref{conj VV}, in which all the states and channels involved are Gaussian
and $\mathcal{R}_{\sigma,\mathcal{N}}$ is the Petz recovery map.\footnote{For convenience of the reader, a Mathematica file demonstrating this numerical counterexample is included with our arXiv post \cite{LDW17}.} This is
similar to what happens in the finite-dimensional case. Another application of
our main result is a more explicit form for an entropy inequality from
\cite{junge15}, whenever the states and channel involved are Gaussian.

More broadly, our result has implications for a resource theory of
non-Gaussianity
\cite{PhysRevLett.89.207903,PhysRevA.65.042304,PhysRevA.67.062320,RevModPhys.77.513,G16}%
, which is not currently complete but for which there has been notable
progress. In particular, in such a theory, one takes the free states and free
operations to be quantum Gaussian states and channels, respectively, and the
expensive or resourceful ones to be non-Gaussian. Such an approach is
motivated by concerns from quantum computation using continuous variables, in
which universal quantum computation is enabled only when non-Gaussian
operations are available
\cite{PhysRevLett.89.207903,PhysRevA.65.042304,PhysRevLett.88.097904}, or from
quantum communication theory, in which non-Gaussian operations are needed for
quantum error correction~\cite{NFC09}, for enhancements over classical
communication strategies~\cite{PhysRevA.89.042309,PhysRevA.93.050302}, for
discrimination of coherent states~\cite{PhysRevA.78.022320}, or for effective
quantum repeaters in quantum key distribution~\cite{PhysRevA.90.062316}. One
might expect the Petz recovery channel to play a critical role in a resource
theory of non-Gaussianity as it has in other resource theories
\cite{AWWW15,ML16,LW16}. As such, our result shows that, in such a resource
theory, the Petz recovery channel is a free operation if the state $\sigma$ is
free and the forward channel $\mathcal{N}$ is free as well. One can quantify
non-Gaussianity of a quantum state $\rho$\ via the following information
measure, known as the relative entropy of non-Gaussianity
\cite{PhysRevA.78.060303,PhysRevA.82.052341}:%
\begin{equation}
D_{G}(\rho)\equiv\min_{\sigma\in\mathcal{G}}D(\rho\Vert\sigma)=D(\rho\Vert
\rho_{G}),
\end{equation}
where $\mathcal{G}$ denotes the set of Gaussian states and $\rho_{G}$ denotes
a quantum Gaussian state with the same mean vector and covariance matrix as
$\rho$ (that $\rho_{G}$ is indeed the minimizer was proven in
\cite{PhysRevA.88.012322}). The relative entropy of non-Gaussianity has not
been established as an operationally meaningful quantifier in the
resource-theoretic sense, but one might think it to be the case in light of
the prominence of relative-entropy quantifiers in other resource theories
\cite{PhysRevLett.115.070503}. However, if it eventually is, our work combined
with the main result of~\cite{junge15}\ would be relevant, given that these
results establish the following interesting inequality, holding for an
arbitrary quantum state $\rho$ and quantum Gaussian channel $\mathcal{N}_{G}$:%
\begin{equation}
D_{G}(\rho)\geq D_{G}(\mathcal{N}_{G}(\rho))-\int_{\mathbb{R}}dt\ p(t)\log
F(\rho,(\mathcal{P}_{\rho_{G},\mathcal{N}_{G}}^{t/2}\circ\mathcal{N}_{G}%
)(\rho)), \label{eq:G-entropy-ineq-recovery}%
\end{equation}
where $p(t):=\frac{\pi}{2}(\cosh(\pi t)+1)^{-1}$ is a probability distribution
parametrized by $t\in\mathbb{R}$ and $\mathcal{P}_{\rho_{G},\mathcal{N}_{G}%
}^{t}$ is a rotated Petz channel~\cite{wilde15}. A corollary of our main
result is that $\mathcal{P}_{\rho_{G},\mathcal{N}_{G}}^{t}$ is a quantum
Gaussian channel (Corollary~\ref{cor:rotated-petz-form}). The inequality in
\eqref{eq:G-entropy-ineq-recovery} has an interpretation similar to that in
previous works:\ if the relative entropy of non-Gaussianity does not decrease
too much under the action of a free operation $\mathcal{N}_{G}$ (so that
$D_{G}(\rho)\approx D(\mathcal{N}_{G}(\rho))$), then one can approximately
reverse the action of $\mathcal{N}_{G}$ by employing a free operation
$\mathcal{P}_{\rho_{G},\mathcal{N}_{G}}^{t/2}$ chosen randomly according to
$p(t)$. Note that one can also write the inequality above as follows:%
\begin{equation}
D(\rho\Vert\rho_{G})\geq D(\mathcal{N}_{G}(\rho)\Vert\mathcal{N}_{G}(\rho
_{G}))-\int_{\mathbb{R}}dt\ p(t)\log F(\rho,(\mathcal{P}_{\rho_{G}%
,\mathcal{N}_{G}}^{t/2}\circ\mathcal{N}_{G})(\rho)).
\label{eq:G-entropy-ineq-recovery-1}%
\end{equation}

We should note that the inequalities stated above are not in contradiction
with the well known no-go theorem for Gaussian quantum error correction
\cite{NFC09}. The main result of~\cite{NFC09} is the following statement:\ if
one is trying to use a Gaussian quantum channel to distill entanglement
between spatially separated parties, then Gaussian encodings combined with
Gaussian decodings are not helpful for this task, whenever performance is
measured with respect to an entanglement measure called logarithmic
negativity. In the inequalities in
\eqref{eq:G-entropy-ineq-recovery}--\eqref{eq:G-entropy-ineq-recovery-1}, the
recovery channel is indeed a quantum Gaussian channel, but the only statement
that these inequalities make is that the performance of the Gaussian Petz
recovery channel for recovery is limited by the relative entropy difference
$D(\rho\Vert\rho_{G})-D(\mathcal{N}_{G}(\rho)\Vert\mathcal{N}_{G}(\rho_{G}))$.

Finally, we suspect that our main result about Gaussian Petz channels might be
useful in contexts beyond the traditional ones in quantum information theory.
Indeed, Petz recovery maps have recently been employed in the context of
high-energy physics, quantum many-body physics, and topological order
\cite{SM16,ZS16,K16}, and so our result here could be useful if the states
involved in those contexts are Gaussian states.


This paper is structured as follows. In Section~\ref{sec:background}, we
review some background material and establish notation. In particular, we
review the Petz recovery map (Section~\ref{sec:petz-recovery-review}) and
bosonic Gaussian states and channels (Section~\ref{sec:Gaussian-review}). In
Section~\ref{sec:main-result}, we state our main result,
Theorem~\ref{thm:main}, which establishes that the Petz recovery map for a
Gaussian state $\sigma$ and a Gaussian channel~$\mathcal{N}$ is itself a
Gaussian channel, and we give an explicit form for it in terms of the
parameters that characterize $\sigma$ and $\mathcal{N}$.
Corollary~\ref{cor:rotated-petz-form} establishes a similar result for the
rotated Petz maps from~\cite{wilde15}. Our proof of Theorem~\ref{thm:main} is
divided into four parts, given in Sections~\ref{sec:first-step-zero-mean-etc}%
--\ref{sec rigorous}. We conclude in Section~\ref{sec:conclusion} with a
summary and some open questions. We point the interested reader to
Appendix~\ref{golden}, in which we give a method for computing products of
exponentials of inhomogeneous quadratic Hamiltonians, building upon
\cite{Balian1969}. Although results of~\cite{petz86,petz88,PETZ}\ establish
that the Petz map is completely positive and trace-preserving,
Appendix~\ref{sec:Petz-CP}\ offers a different argument that the Gaussian Petz
map is completely positive.

\section{Background and notation}

\label{sec:background}

\subsection{Petz recovery map}

\label{sec:petz-recovery-review}

As discussed in Section~\ref{sec:backgr-recover}, the Petz recovery map is a
notable object playing a crucial role in the theory of quantum recoverability.
It has been interpreted in~\cite{PhysRevA.88.052130} as a quantum
generalization of the Bayes rule from probability theory. Given a state
$\sigma$ and a channel $\mathcal{N}$, the associated Petz map $\mathcal{P}%
_{\sigma,\mathcal{N}}$ is defined as a linear map satisfying the following
\cite{petz86,petz88,PETZ}:%
\begin{equation}
\langle A,\mathcal{N}^{\dag}(B)\rangle_{\sigma}=\langle\mathcal{P}%
_{\sigma,\mathcal{N}}^{\dag}(A),B\rangle_{\mathcal{N}(\sigma)},\ \ \ \forall
A,B,\label{eq:Petz-equations}%
\end{equation}
where $A$ and $B$ are bounded operators and the weighted Hilbert--Schmidt
inner product is defined for bounded operators $\tau_{1}$ and $\tau_{2}$ and a
trace-class operator $\xi$ as%
\begin{equation}
\left\langle \tau_{1},\tau_{2}\right\rangle _{\xi}\equiv\operatorname{Tr}%
[\tau_{1}^{\dag}\xi^{1/2}\tau_{2}\xi^{1/2}].
\end{equation}
The map $\mathcal{P}_{\sigma,\mathcal{N}}$ is unique if $\mathcal{N}(\sigma)$
is a faithful operator~\cite{petz86,petz88,PETZ}, and otherwise, it is unique
on the support of this operator. If $\sigma$ acts on a finite-dimensional
Hilbert space and $\mathcal{N}$ is a quantum channel with finite-dimensional
inputs and outputs, then the Petz map takes the following explicit form
\cite{hayden03}:%
\begin{equation}
\mathcal{P}_{\sigma,\mathcal{N}}(\omega)\,\equiv\,\sigma^{1/2}\mathcal{N}%
^{\dag}\!\left(  \mathcal{N}(\sigma)^{-1/2}\omega\mathcal{N}(\sigma
)^{-1/2}\right)  \sigma^{1/2}\,,\label{Petz}%
\end{equation}
where $\mathcal{N}(\sigma)^{-1/2}$ is understood as a generalized inverse
(i.e., inverse on the support of $\mathcal{N}(\sigma)$). Sometimes we omit the
dependence of $\mathcal{P}$ on $\sigma$ and $\mathcal{N}$ for the sake of
simplicity. A rotated Petz map $\mathcal{P}_{\sigma,\mathcal{N}}^{t}$ for
$t\in\mathbb{R}$, a state $\sigma$, and a channel $\mathcal{N}$ is defined as
\cite{wilde15}%
\begin{equation}
\mathcal{P}_{\sigma,\mathcal{N}}^{t}(\omega)\equiv\sigma^{it}\mathcal{P}%
_{\sigma,\mathcal{N}}(\mathcal{N}(\sigma)^{-it}\omega\mathcal{N}(\sigma
)^{it})\sigma^{-it},\label{eq:rotated-petz}%
\end{equation}
with $\sigma^{it}=\exp(it\log\sigma)$ being understood as a unitary evolution
according to the Hamiltonian $\log\sigma$. Even if it is not \textit{a priori} apparent, it can be shown that the Petz map~\eqref{Petz} as well as its rotated versions~\eqref{eq:rotated-petz} are guaranteed to be completely positive and trace-preserving (i.e., valid quantum channels) under the above hypotheses.

\subsection{Quantum Gaussian states and channels}

\label{sec:Gaussian-review}

Here we provide some background on bosonic quantum Gaussian states and channels (see
\cite{CEGH08,adesso14,S17}\ for reviews). An $n$-mode quantum system is
described by a density operator acting on a tensor-product Hilbert space. To
the $j$th Hilbert space in the tensor product, for $j\in\left\{
1,\ldots,n\right\}  $, we let $x_{j}$ and $p_{j}$ denote the position- and
momentum-quadrature operator, respectively. These operators satisfy the
canonical commutation relations:\ $\left[  x_{j},p_{k}\right]  =i\delta_{j,k}%
$, where we have set $\hbar=1$. It is convenient to form a vector
$r=(x_{1},\ldots,x_{n},p_{1},\ldots,p_{n})^{T}$\ from these operators, and
then we can rewrite the canonical commutation relations in matrix form as
follows:%
\begin{equation}
\lbrack r,r^{T}]=i\Omega,\label{CCR}%
\end{equation}
where%
\begin{equation}
\Omega\equiv%
\begin{bmatrix}
0 & 1\\
-1 & 0
\end{bmatrix}
\otimes I_{n},
\end{equation}
and $I_{n}$ denotes the $n\times n$ identity matrix. We often make use of the
identities $\Omega^{T}\Omega=I$ and $\Omega^{T}=-\Omega$.

The displacement (Weyl)\ operator $D_{z}$ plays an important role in Gaussian
quantum information, defined for $z\in\mathbb{R}^{2n}$ as%
\begin{equation}
D_{z}\equiv\exp(iz^{T}\Omega r).
\end{equation}
For $z_{1},z_{2}\in\mathbb{R}^{2n}$, the displacement operators satisfy the
following composition rule:%
\begin{equation}
D_{z_{1}}D_{z_{2}}=D_{z_{1}+z_{2}}e^{-\frac{i}{2}z_{1}^{T}\Omega z_{2}%
}.\label{comp displacement}%
\end{equation}
It can be shown that displacement operators form a complete, orthogonal set of
operators, and their Hilbert--Schmidt orthogonality relation is as follows:
\begin{equation}
\operatorname{Tr}[D_{z_{1}}D_{-z_{2}}]=(2\pi)^{n}\delta(z_{1}-z_{2}%
).\label{trace displacement}%
\end{equation}
Moreover, due to their completeness, these operators allow for a Fourier-Weyl
expansion of a quantum state, in terms of a characteristic function. In more
detail, a quantum state $\rho$\ has a characteristic function $\chi_{\rho}%
(w)$,\ defined as%
\begin{equation}
\chi_{\rho}(w)\equiv\operatorname{Tr}[\rho D_{-w}],
\end{equation}
and the original state $\rho$ can be written in terms of $\chi_{\rho}(w)$ as%
\begin{equation}
\rho=\int\frac{d^{2n}w}{\left(  2\pi\right)  ^{n}}\ \chi_{\rho}(w)\ D_{w}.
\end{equation}
The mean vector $s_{\rho}\in\mathbb{R}^{2n}$ and $2n\times2n$ covariance
matrix $V_{\rho}$ of a quantum state $\rho$ are defined as
\begin{align}
s_{\rho} &  \equiv\left\langle r\right\rangle _{\rho}=\operatorname{Tr}%
[r\rho],\\
V_{\rho} &  \equiv\langle\{r-s_{\rho},r^{T}-s_{\rho}^{T}\}\rangle_{\rho
}=\operatorname{Tr}[\{r-s_{\rho},r^{T}-s_{\rho}^{T}\}\rho].
\end{align}
It follows from the above definition that the covariance matrix $V_{\rho}$ is symmetric.

A quantum Gaussian state is a ground or thermal state of a Hamiltonian that is
quadratic in the position- and momentum-quadrature operators. In particular,
up to an irrelevant additive constant, any such Hamiltonian has the form $\frac{1}{2}\left(  r-s\right)  ^{T}H\left(
r-s\right)  $, where $s\in\mathbb{R}^{2n}$ and $H$ is a $2n\times2n$ positive
definite matrix that we refer to as the Hamiltonian matrix. Then a quantum
Gaussian state $\rho$ takes the form%
\begin{equation}
\rho=Z_{\rho}^{-1}\exp\!\left(  -\frac{1}{2}(r-s_{\rho})^{T}H_{\rho}%
(r-s_{\rho})\right)  , \label{eq:exp-form-Gaussian}%
\end{equation}
where $Z_{\rho}\equiv\operatorname{Tr}\!\left[  \exp\!\left(  -\frac{1}%
{2}(r-s_{\rho})^{T}H_{\rho}(r-s_{\rho})\right)  \right]  $ and one can show
that $\left\langle r\right\rangle _{\rho}=s_{\rho}\in\mathbb{R}^{2n}$ (i.e.,
$s_{\rho}$ is the mean vector of $\rho$). Defining%
\begin{equation}
V_{\rho}\equiv\coth\!\left(  \frac{i\Omega H_{\rho}}{2}\right)  i\Omega,
\label{V coth}%
\end{equation}
one can also show that $V_{\rho}$ is the covariance matrix of $\rho$ \cite{PhysRevA.71.062320,K06,H11,Holevo2011,H12}, whose
matrix elements satisfy $V_{\rho}^{j,k}=\langle\{r_{j}-s_{\rho}^{j}%
,r_{k}-s_{\rho}^{k}\}\rangle_{\rho}$ and the Heisenberg uncertainty relation
\cite{PhysRevA.49.1567}:%
\begin{equation}
V_{\rho}+i\Omega\geq0.
\end{equation}
A quantum Gaussian state is faithful (having full support)\ if $V_{\rho
}+i\Omega>0$.

A quantum Gaussian state $\rho$\ with mean vector $s_{\rho}$ and covariance
matrix $V_{\rho}$ has the following Gaussian characteristic function:%
\begin{equation}
\chi_{\rho}(w)=\exp\!\left(  -\frac{1}{4}\left(  \Omega w\right)  ^{T}V_{\rho
}\Omega w+i\left(  \Omega w\right)  ^{T}s_{\rho}\right)  ,
\label{char funct G}%
\end{equation}
so that it can be written in the following way:%
\begin{equation}
\rho=\int\frac{d^{2n}w}{(2\pi)^{n}}\ \exp\!\left(  -\frac{1}{4}\left(  \Omega
w\right)  ^{T}V_{\rho}\Omega w+i\left(  \Omega w\right)  ^{T}s_{\rho}\right)
D_{w}. \label{G state}%
\end{equation}
After a change of variables ($w\rightarrow\Omega w$), this representation
becomes%
\begin{equation}
\rho=\int\frac{d^{2n}w}{(2\pi)^{n}}\ \exp\!\left(  -\frac{1}{4}w^{T}V_{\rho
}w-iw^{T}s_{\rho}\right)  D_{\Omega w}. \label{G state 2}%
\end{equation}

A quantum Gaussian channel is a completely positive, trace-preserving map that
takes Gaussian input states to Gaussian output states. A quantum Gaussian
channel $\mathcal{N}$ that takes $n$-mode Gaussian input states to $m$-mode
Gaussian output states is specified by a $2m\times2n$ transformation matrix
$X$, a $2m\times2m$ positive semi-definite, additive noise matrix $Y$, and a
displacement vector $\delta\in\mathbb{R}^{2n}$. The action of such a channel
on a generic state $\rho$ with characteristic function $\chi_{\rho}(w)$ is to
output a state $\mathcal{N}(\rho)$ having the following characteristic
function:%
\begin{equation}
\chi_{\mathcal{N}(\rho)}(w)=\chi_{\rho}(\Omega^{T}X^{T}\Omega w)\exp\!\left(
-\frac{1}{4}\left(  \Omega w\right)  ^{T}Y\Omega w+i\left(  \Omega w\right)
^{T}\delta\right)  . \label{eq:Gaussian-channel-char-func}%
\end{equation}
Then the channel $\mathcal{N}$ leads to the following transformation of the
covariance matrix $V$ and mean vector $s$ of an input quantum Gaussian state:%
\begin{equation}
\mathcal{N}:\ \left\{
\begin{array}
[c]{lcl}%
V & \longmapsto & XVX^{T}+Y\\
s & \longmapsto & Xs+\delta
\end{array}
\right.  . \label{N cm}%
\end{equation}
The matrices $X$ and $Y$ should satisfy the following condition in order for
the map $\mathcal{N}$ to be completely positive:%
\begin{equation}
Y+i\Omega\geq iX\Omega X^{T}.
\label{G cp condition}
\end{equation}

The adjoint of a quantum channel $\mathcal{N}$ is defined as the unique linear
map satisfying the following for all $A$ and $B$:%
\begin{equation}
\langle A,\mathcal{N}(B)\rangle=\langle\mathcal{N}^{\dag}(A),B\rangle,
\end{equation}
where $B$ is an arbitrary trace-class operator, $A$ is an arbitrary bounded
operator, and the Hilbert--Schmidt inner product is defined for operators
$A_{1}$ and $A_{2}$ as $\left\langle A_{1},A_{2}\right\rangle \equiv
\operatorname{Tr}[A_{1}^{\dag}A_{2}]$. The adjoint map $\mathcal{N}^{\dag}$ is
completely positive and unital if $\mathcal{N}$ is completely positive and
trace-preserving. The action of the adjoint $\mathcal{N}^{\dag}$ of a quantum
Gaussian channel $\mathcal{N}$ defined by~\eqref{N cm}\ is as follows
\cite{CEGH08,GLS16}, when acting on a displacement operator $D_{\Omega z}$:%
\begin{equation}
\mathcal{N}^{\dag}(D_{\Omega z})=D_{\Omega X^{T}z}\exp\!\left(  -\frac{1}%
{4}z^{T}Yz+iz^{T}\delta\right)  . \label{eq:adjoint-on-disps}%
\end{equation}
The action of the adjoint $\mathcal{N}^{\dag}$ on a quantum Gaussian state
with covariance matrix $V$\ and mean vector $s$ is then to output a quantum
Gaussian operator described by covariance matrix $X^{-1}\left(  V+Y\right)
X^{-T}$ and mean vector $X^{-1}(s-\delta)$ whenever $X$ is invertible
\cite[Appendix~B]{GLS16}. We summarize these transformation rules as follows:%
\begin{equation}
\mathcal{N}^{\dag}:\ \left\{
\begin{array}
[c]{lcl}%
V & \longmapsto & X^{-1}\left(  V+Y\right)  X^{-T}\\
s & \longmapsto & X^{-1}(s-\delta)
\end{array}
\right.  . \label{N dag cm}%
\end{equation}
Typically one thinks of the channel $\mathcal{N}$ as acting in the
Schr\"{o}dinger picture, taking input states to output states, and one thinks
of the adjoint $\mathcal{N}^{\dag}$ as acting in the Heisenberg picture,
taking input bounded operators to output bounded operators. So this is why we
have specified the channel $\mathcal{N}$ in terms of its action on
characteristic functions, which describe states, and the adjoint
$\mathcal{N}^{\dag}$ in terms of its action on displacement operators, a
natural choice of bounded operators in our context here.

Often we find it useful to write%
\begin{equation}
\sigma=D_{s_{\sigma}}^{\dag}\sigma_{0}D_{s_{\sigma}},
\label{eq:state-means-out}%
\end{equation}
where $\sigma_{0}$ is a Gaussian state with the same covariance matrix as
$\sigma$ but with vanishing mean vector. Analogously, the channel
$\mathcal{N}$ in~\eqref{eq:Gaussian-channel-char-func}\ admits the following
decomposition:%
\begin{equation}
\mathcal{N}(\cdot)=D_{\delta}^{\dag}\mathcal{N}_{0}(\cdot)D_{\delta},
\label{eq:channel-means-out}%
\end{equation}
where $\mathcal{N}_{0}$ is a zero-displacement Gaussian channel, acting as in
\eqref{N cm} but with $\delta=0$. Taking the adjoint gives%
\begin{equation}
\mathcal{N}^{\dag}(\cdot)=\mathcal{N}_{0}^{\dag}\!\left(  D_{\delta}%
(\cdot)D_{\delta}^{\dag}\right)  . \label{eq:adjoint-ch-means-out}%
\end{equation}
Applying $\mathcal{N}$ to $\sigma$ yields%
\begin{equation}
\mathcal{N}(\sigma)=D_{Xs+\delta}^{\dag}\mathcal{N}_{0}(\sigma_{0}%
)D_{Xs+\delta}, \label{eq:channel-state-means-out}%
\end{equation}
which follows from~\eqref{N cm}. We also make use of the following channel
covariance relations:%
\begin{align}
\mathcal{N}(D_{\gamma}^{\dag}(\cdot)D_{\gamma})  &  =D_{X\gamma+\delta}^{\dag
}\mathcal{N}_{0}(\cdot)D_{X\gamma+\delta},\label{eq:channel-cov-disps}\\
\mathcal{N}^{\dag}(D_{\gamma}^{\dag}(\cdot)D_{\gamma})  &  =D_{X^{-1}%
(\gamma-\delta)}^{\dag}\mathcal{N}_{0}^{\dag}(\cdot)D_{X^{-1}(\gamma-\delta)},
\label{eq:adjoint-channel-cov-disps}%
\end{align}
which follow from~\eqref{eq:Gaussian-channel-char-func},~\eqref{N cm},
\eqref{eq:adjoint-on-disps}, and~\eqref{N dag cm}. Note that
\eqref{eq:adjoint-channel-cov-disps}\ holds whenever $X$ is invertible.

Finally, given a Gaussian state $\sigma$ with mean vector $s_{\sigma}$ and
covariance matrix $V_{\sigma}$, we can consider a unitary rotation of the form
$\sigma^{it}=\exp(it\log\sigma)$ for $t\in\mathbb{R}$. By using the
representation in~\eqref{eq:exp-form-Gaussian} with the Hamiltonian matrix
$H_{\sigma}$, we can write the unitary $\sigma^{it}$\ as%
\begin{align}
\sigma^{it}  &  =\exp\!\left(  -\frac{i}{2}\left(  r-s_{\sigma}\right)
^{T}H_{\sigma}t\left(  r-s_{\sigma}\right)  \right)  \exp\!\left(  -it\log
Z_{\sigma}\right) \label{eq:sigma-to-it}\\
&  =D_{-s_{\sigma}}\left[  \exp\!\left(  \frac{i}{2}r^{T}\left(  -H_{\sigma
}t\right)  r\right)  \exp\!\left(  -it\log Z_{\sigma}\right)  \right]
D_{s_{\sigma}},
\end{align}
where we have used the fact that $\left(  r-s_{\sigma}\right)  ^{T}H_{\sigma
}\left(  r-s_{\sigma}\right)  =D_{-s_{\sigma}}r^{T}H_{\sigma}rD_{s_{\sigma}}$
and the operator identity $B\exp(A)B^{-1}=\exp(BAB^{-1})$. The unitary
$\sigma^{it}$ is a Gaussian unitary because it is generated by a Hamiltonian
no more than quadratic in the position- and momentum-quadrature operators. Let us define the
symplectic transformation corresponding to the unitary $\exp\!\left(  \frac
{i}{2}r^{T}\left(  -H_{\sigma}t\right)  r\right)  $ as%
\begin{equation}
S_{\sigma,t}\equiv\exp(\Omega H_{\sigma}t),
\end{equation}
so that%
\begin{equation}
\sigma^{it}r\sigma^{-it}=S_{\sigma,-t}\left(  r-s_{\sigma}\right)  + s_{\sigma},
\label{eq:sigma-to-it-sympaction}%
\end{equation}
where we used that $D_{s_{\sigma}}rD_{-s_{\sigma}}=r+s_{\sigma}$. The above formula implies that
\begin{align}
V_{\sigma^{it} \omega \sigma^{-it}} &= S_{\sigma, t} V_\omega S_{\sigma, t}^T , \\
s_{\sigma^{it} \omega \sigma^{-it}} &= S_{\sigma,t} (s_\omega - s_\sigma) + s_\sigma .
\end{align}

\section{Main result: Petz map as a quantum Gaussian channel}

\label{sec:main-result}

Our main result is the following theorem:

\begin{theorem}
\label{thm:main}Let $\sigma$ be a quantum Gaussian state with mean vector
$s_{\sigma}$ and covariance matrix $V_{\sigma}$, and let $\mathcal{N}$ be a
quantum Gaussian channel with its action on an input state as described in
\eqref{N cm}. Suppose furthermore that $\mathcal{N}(\sigma)$ is a faithful
quantum state. Then the Petz recovery map $\mathcal{P}_{\sigma,\mathcal{N}}$
is a quantum Gaussian channel with the following action:%
\begin{equation}
\mathcal{P}_{\sigma,\mathcal{N}}:\ \left\{
\begin{array}
[c]{lcl}%
V & \longmapsto & X_{P}VX_{P}^{T}+Y_{P}\\
s & \longmapsto & X_{P}s+\delta_{P}%
\end{array}
\right.  , \label{eq:Petz-Gaussian}%
\end{equation}
where%
\begin{align}
X_{P}  &  \equiv\sqrt{I+\left(  V_{\sigma}\Omega\right)  ^{-2}}V_{\sigma}%
X^{T}\sqrt{I+\left(  \Omega V_{\mathcal{N}(\sigma)}\right)  ^{-2}}%
^{-1}V_{\mathcal{N}(\sigma)}^{-1}, \label{eq:XP} \\
Y_{P}  &  \equiv V_{\sigma}-X_{P}V_{\mathcal{N}(\sigma)}X_{P}^{T}, \label{eq:YP} \\
\delta_{P}  &  \equiv s_{\sigma}-X_{P}\left(  Xs_{\sigma}+\delta\right)
,\label{eq:Petz-displacement}\\
V_{\mathcal{N}(\sigma)}  &  =XV_{\sigma}X^{T}+Y.
\end{align}
That is, $\mathcal{P}_{\sigma,\mathcal{N}}$ in~\eqref{eq:Petz-Gaussian}\ is
the unique linear map satisfying~\eqref{eq:Petz-equations} for $\sigma$ and
$\mathcal{N}$ as described above.
\end{theorem}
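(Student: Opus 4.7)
The plan is to verify \eqref{eq:Petz-equations} directly by showing that the Gaussian channel with parameters $(X_P, Y_P, \delta_P)$ from the theorem satisfies it when tested on the complete set of displacement operators; by the uniqueness stated after \eqref{eq:Petz-equations}, this identifies it with $\mathcal{P}_{\sigma,\mathcal{N}}$. As a preliminary reduction, substituting the decompositions \eqref{eq:state-means-out}--\eqref{eq:channel-state-means-out} into the Petz formula \eqref{Petz} yields the covariance relation
\[
\mathcal{P}_{\sigma,\mathcal{N}}(\omega) = D_{s_\sigma}^\dagger\, \mathcal{P}_{\sigma_0,\mathcal{N}_0}\!\bigl(D_{X s_\sigma + \delta}\, \omega\, D_{X s_\sigma + \delta}^\dagger\bigr)\, D_{s_\sigma},
\]
which via \eqref{eq:channel-cov-disps} accounts for the displacement \eqref{eq:Petz-displacement} and reduces the task to the zero-mean, zero-displacement case.

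For this reduced problem I would test \eqref{eq:Petz-equations} on $A = D_{\Omega z_1}$ and $B = D_{\Omega z_2}$: using \eqref{eq:adjoint-on-disps} for both $\mathcal{N}^\dagger$ and the adjoint of the candidate Gaussian $\mathcal{P}$, each side of \eqref{eq:Petz-equations} reduces to an explicit Gaussian factor multiplying a trace of the form $T(\tau, a, b) := \operatorname{Tr}[D_a\,\tau^{1/2}\,D_b\,\tau^{1/2}]$, with $\tau = \sigma$ on the left and $\tau = \mathcal{N}(\sigma)$ on the right. Since $\tau^{1/2}$ is proportional to a Gaussian state with Hamiltonian matrix $H_\tau/2$ and hence ``effective covariance'' $\tilde V_\tau = \coth(i\Omega H_\tau/4)\,i\Omega$, expanding in the Fourier--Weyl representation \eqref{G state 2}, applying \eqref{comp displacement} and \eqref{trace displacement}, and fixing the normalization via $T(\tau, 0, 0) = 1$, a Gaussian integral yields
\[
T(\tau, a, b) = \exp\!\Bigl(-\tfrac{1}{8}(a+b)^T\Omega^T \tilde V_\tau \Omega(a+b) - \tfrac{1}{8}(a-b)^T \tilde V_\tau^{-1}(a-b)\Bigr).
\]

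Matching the quadratic forms in $(z_1, z_2)$ on the two sides of \eqref{eq:Petz-equations}, the cross term determines $X_P$ and the diagonal terms fix $Y_P$, both in terms of $\tilde V_\sigma$ and $\tilde V_{\mathcal{N}(\sigma)}$. To convert to the stated formulas in $V$, I would use the hyperbolic half-angle identities
\[
\tilde V + \Omega^T\tilde V^{-1}\Omega = 2V, \qquad \tilde V - \Omega^T\tilde V^{-1}\Omega = 2\sqrt{I+(V\Omega)^{-2}}\, V,
\]
which follow from $V = \coth(M)\,i\Omega$, $\tilde V = \coth(M/2)\,i\Omega$ with $M = i\Omega H/2$, the scalar identities $\coth(x/2)\pm\tanh(x/2) = 2\coth(x),\ 2\,\operatorname{csch}(x)$, and the algebraic fact $V\Omega = -i\coth(M)$. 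Substitution then reproduces exactly \eqref{eq:XP}--\eqref{eq:YP}, the factors of $2$ in the second identity cancelling between the numerator and denominator in the expression for $X_P$. Complete positivity \eqref{G cp condition} of the resulting pair $(X_P, Y_P)$ must also be verified (the paper notes an alternative route via Appendix \ref{sec:Petz-CP}).

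The main obstacle will be the careful treatment of the second half-angle identity above, since $\Omega$ and matrix functions of $M = i\Omega H/2$ do not commute and expressions such as $\coth(M/2)\Omega$ versus $\Omega\coth(M/2)$ must be handled with care; the product formulas for exponentials of quadratic Hamiltonians developed in Appendix \ref{golden} are the natural toolkit. A secondary technical difficulty is the rigorous justification in infinite dimensions: since $\mathcal{N}(\sigma)^{-1/2}$ is unbounded, the Petz formula \eqref{Petz} must be interpreted as a sesquilinear form identity on a dense domain, with the characteristic-function computation above extended by continuity; this is presumably what Section \ref{sec rigorous} addresses.
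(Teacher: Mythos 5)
Your proposal is correct and follows essentially the same route as the paper: reduce to the zero-mean, zero-displacement case, verify the defining relation \eqref{eq:Petz-equations} on displacement operators using an explicit formula for $\operatorname{Tr}[D_a\,\tau^{1/2}D_b\,\tau^{1/2}]$ (your half-angle identities for $\tilde V_\tau$ are exactly the relations \eqref{average U = V} and its companion underlying Lemmas~\ref{lemma sqrt G} and~\ref{lemma sqrt sandwich}), and then extend from a dense class of operators to all bounded operators before invoking Petz's uniqueness, which is precisely Section~\ref{sec rigorous}. The only cosmetic difference is that you extract $X_P$ and $Y_P$ by matching quadratic forms inside the verification, whereas the paper first derives them heuristically by composing the three maps \eqref{eq:Petz-1}--\eqref{eq:Petz-3} and then verifies.
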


It may not be obvious \textit{a priori}, but the Gaussian map defined by~\eqref{eq:XP} and~\eqref{eq:YP} is indeed a valid Gaussian channel; i.e., it meets the requirement given by~\eqref{G cp condition}. An explicit verification of this fact is provided in Appendix~\ref{sec:Petz-CP}.
The following corollary is a direct consequence of Theorem~\ref{thm:main} and
the discussion surrounding~\eqref{eq:sigma-to-it}--\eqref{eq:sigma-to-it-sympaction}:

\begin{corollary}
\label{cor:rotated-petz-form}For $\sigma$ and $\mathcal{N}$ as given in
Theorem~\ref{thm:main}, the rotated Petz map $\mathcal{P}_{\sigma,\mathcal{N}%
}^{t}$\ (defined in~\eqref{eq:rotated-petz}) is also a quantum Gaussian
channel with the same action as the Petz recovery channel $\mathcal{P}%
_{\sigma,\mathcal{N}}$ but with the substitutions%
\begin{align}
X_{P}  &  \rightarrow X_{P}^{t} \equiv S_{\sigma,t}X_{P}S_{\mathcal{N}%
(\sigma),-t},\\
Y_{P}  &  \rightarrow Y_{P}^{t} \equiv S_{\sigma,t}Y_{P}S_{\sigma,t}^{T},\\
\delta_{P}  &  \rightarrow\delta_{P}^{t}\equiv s_{\sigma}-X_{P}^{t}\left(
Xs_{\sigma}+\delta\right)  .
\end{align}
That is, $\mathcal{P}_{\sigma,\mathcal{N}}^{t}$ is a quantum Gaussian channel
with the following action:%
\begin{equation}
\mathcal{P}_{\sigma,\mathcal{N}}^{t}:\ \left\{
\begin{array}
[c]{lcl}%
V & \longmapsto & X_{P}^{t}V(X_{P}^{t})^{T}+Y_{P}^{t}\\
s & \longmapsto & X_{P}^{t}s+\delta_{P}^{t}%
\end{array}
\right.  .
\end{equation}

\end{corollary}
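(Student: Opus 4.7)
The plan is to decompose the rotated Petz map as
\[
\mathcal{P}^t_{\sigma,\mathcal{N}} = \mathcal{U}_\sigma^{t} \circ \mathcal{P}_{\sigma,\mathcal{N}} \circ \mathcal{U}_{\mathcal{N}(\sigma)}^{-t},
\]
where $\mathcal{U}_\tau^{s}(\cdot) := \tau^{is}(\cdot)\tau^{-is}$ for a Gaussian state $\tau$, and then invoke the closure of the class of Gaussian channels under composition. The middle factor is Gaussian by Theorem~\ref{thm:main} with specification $(X_P, Y_P, \delta_P)$, while the outer factors are Gaussian unitaries: from~\eqref{eq:sigma-to-it}, $\sigma^{it}$ is generated by a Hamiltonian at most quadratic in $r$, and from~\eqref{eq:sigma-to-it-sympaction} and the identities immediately after, $\mathcal{U}_\sigma^{t}$ acts on covariance matrices and mean vectors as $V \mapsto S_{\sigma,t}\,V\,S_{\sigma,t}^{T}$ and $s \mapsto S_{\sigma,t}(s - s_\sigma) + s_\sigma$, with the analogous rule (substitutions $\sigma \to \mathcal{N}(\sigma)$, $t \to -t$) holding for $\mathcal{U}_{\mathcal{N}(\sigma)}^{-t}$. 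Since Gaussian channels are closed under composition, this already establishes that $\mathcal{P}^t_{\sigma,\mathcal{N}}$ is a Gaussian channel.

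What remains is to extract the explicit parameters by chaining the three affine maps in order. On the covariance side, an input $V$ becomes, successively, $S_{\mathcal{N}(\sigma),-t}\,V\,S_{\mathcal{N}(\sigma),-t}^{T}$, then $X_P(\cdot)X_P^{T} + Y_P$ applied to that, and finally $S_{\sigma,t}(\cdot)S_{\sigma,t}^{T}$ applied to the result. Reading off the coefficients immediately yields $X_P^{t} = S_{\sigma,t}\, X_P\, S_{\mathcal{N}(\sigma),-t}$ and $Y_P^{t} = S_{\sigma,t}\, Y_P\, S_{\sigma,t}^{T}$, matching the claim.

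For the displacement vector, I would track the affine action in parallel. After the three steps, the part linear in $s$ collapses cleanly to $X_P^{t} s$. The constant terms simplify using $s_{\mathcal{N}(\sigma)} = X s_\sigma + \delta$ (from~\eqref{N cm}) together with the identity $\delta_P = s_\sigma - X_P(X s_\sigma + \delta)$ from Theorem~\ref{thm:main}: after regrouping, the combination $S_{\sigma,t}\bigl[X_P\, s_{\mathcal{N}(\sigma)} + \delta_P - s_\sigma\bigr]$ appears, which vanishes identically by the definition of $\delta_P$. What survives reduces to $s_\sigma - X_P^{t}\, s_{\mathcal{N}(\sigma)} = s_\sigma - X_P^{t}(X s_\sigma + \delta) = \delta_P^{t}$, as stated.

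The only obstacle is careful notational bookkeeping: one must distinguish the two distinct affine symplectic actions (centered at $s_\sigma$ and at $s_{\mathcal{N}(\sigma)}$) coming from $\sigma^{it}$ and $\mathcal{N}(\sigma)^{-it}$, and track the signs of the exponents consistently. No new analytic content beyond Theorem~\ref{thm:main} and the Gaussian-unitary representation~\eqref{eq:sigma-to-it}--\eqref{eq:sigma-to-it-sympaction} is required; the corollary is essentially a composition-of-Gaussian-maps calculation.
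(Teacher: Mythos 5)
Your proposal is correct and follows essentially the same route as the paper, which states the corollary as a direct consequence of Theorem~\ref{thm:main} together with the discussion surrounding~\eqref{eq:sigma-to-it}--\eqref{eq:sigma-to-it-sympaction}; you have simply spelled out the composition-of-affine-maps bookkeeping that the paper leaves implicit. The cancellation $X_P s_{\mathcal{N}(\sigma)}+\delta_P-s_\sigma=0$ you identify is exactly the right mechanism for recovering $\delta_P^t$.
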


\begin{remark}
The following entropy inequality was proven to hold whenever $\rho$ and
$\sigma$ are density operators and $\mathcal{N}$ is a quantum channel
\cite{junge15}:%
\begin{equation}
D(\rho\Vert\sigma)\geq D(\mathcal{N}(\rho)\Vert\mathcal{N}(\sigma
))-\int_{\mathbb{R}}dt\ p(t)\log F(\rho,(\mathcal{P}_{\sigma,\mathcal{N}%
}^{t/2}\circ\mathcal{N})(\rho)),
\end{equation}
where $p(t):=\frac{\pi}{2}(\cosh(\pi t)+1)^{-1}$ is a probability distribution
parametrized by $t\in\mathbb{R}$. In the case that $\rho$ and $\sigma$ are
quantum Gaussian states and $\mathcal{N}$ is a quantum Gaussian channel,
Corollary~\ref{cor:rotated-petz-form}\ allows us to conclude that
$\mathcal{P}_{\sigma,\mathcal{N}}^{t/2}$ is a quantum Gaussian channel for all
$t\in\mathbb{R}$. Furthermore, there are explicit, compact formulas for the
relative entropy~\cite{PhysRevA.64.063811,PhysRevA.71.062320,K06,PLOB15}\ and
fidelity~\cite{PS00,WKO00,MM12,Pirandola}\ of two quantum Gaussian states. In both cases, the
formulas are given exclusively in terms of the mean vectors and covariance
matrices of the involved states. Thus, when the states and channel involved
are all Gaussian, the above inequality can be rewritten in a simpler form involving only finite-dimensional matrices instead of trace-class operators acting on infinite-dimensional Hilbert spaces.
\end{remark}

The forthcoming subsections establish a proof of Theorem~\ref{thm:main}.
Before delving into our proof, we highlight our proof strategy, which proceeds
according to the following steps:

\begin{enumerate}
\item \textit{Even though the explicit form of the Petz map in
\eqref{Petz}\ is not generally valid in the infinite-dimensional case because
the inverse of a density operator may be unbounded, we work with it anyway, as
an ansatz} (call this \textbf{Ansatz~1}). Under Ansatz~1, we first show that
it suffices to consider the case when the state $\sigma$ is a zero-mean
Gaussian state and the channel $\mathcal{N}$ does not apply any displacement
to the mean vector of its input, so that $s_{\sigma}=0$ and $\delta=0$, with
$\delta$ defined in~\eqref{eq:Gaussian-channel-char-func} and~\eqref{N cm}.

\item Under the same Ansatz~1, we arrive at the hypothesis that
\eqref{eq:Petz-Gaussian} gives the explicit form for the action of the Petz
map on a Gaussian input state. Recall from~\eqref{Petz} that the Petz map is a
serial concatenation of three completely positive maps:%
\begin{align}
(\cdot) &  \rightarrow\mathcal{N}(\sigma)^{-1/2}(\cdot)\mathcal{N}%
(\sigma)^{-1/2},\label{eq:Petz-1}\\
(\cdot) &  \rightarrow\mathcal{N}^{\dag}(\cdot),\label{eq:Petz-2}\\
(\cdot) &  \rightarrow\sigma^{1/2}(\cdot)\sigma^{1/2}.\label{eq:Petz-3}%
\end{align}
To handle the first completely positive\ map in~\eqref{eq:Petz-1}, we proceed
with an additional ansatz (\textbf{Ansatz~2}) \textit{that taking the inverse
of a Gaussian state corresponds to negating its covariance matrix}. This is
motivated by the representation in~\eqref{eq:exp-form-Gaussian}, in which
inverting the density operator has the effect of negating the Hamiltonian
matrix, which in turn has the effect of negating the covariance matrix due to
the fact that $\operatorname{arcoth}$ is an odd function. Furthermore, results
of~\cite{PS00} allow us to conclude that sandwiching a
Gaussian state by the square root of another Gaussian state is a Gaussian map
resulting in another unnormalized, Gaussian state. To handle the second map in
\eqref{eq:Petz-2}, we can directly apply a result given in~\cite[Appendix~B]%
{GLS16}, which gives an explicit form for the action of the adjoint of a
Gaussian channel on a Gaussian state (see also the review in
\eqref{N dag cm}). We also work with a final \textbf{Ansatz~3}, \textit{which
is the assumption that the matrix }$X$\textit{ in~\eqref{N cm}\ is
invertible}. Later, we show how this assumption is not necessary. To handle
the third completely positive\ map in~\eqref{eq:Petz-3}, we again apply the
aforementioned result about sandwiching a Gaussian state by the square root of another.

\item After arriving at an explicit form for the Petz map by using
Ansatzes~1--3, we verify that this explicit form satisfies the equations in
\eqref{eq:Petz-equations} whenever the operators $A$ and $B$ are
Hilbert--Schmidt operators.

\item We finally employ a limiting argument to conclude that if
\eqref{eq:Petz-equations} is satisfied when $A$ and $B$ are Hilbert--Schmidt
operators, then the equations are satisfied when $A$ and $B$ are arbitrary
bounded operators. By a result of~\cite{petz86,petz88,PETZ}, we can finally
conclude that the Gaussian channel given in Theorem~\ref{thm:main} is the
unique quantum channel satisfying~\eqref{eq:Petz-equations}. This step then
concludes our proof of Theorem~\ref{thm:main}.
\end{enumerate}

In the subsections that follow, we give detailed proofs for each step above.

\subsection{Step~1: Sufficiency of focusing on zero-mean Gaussian states and
zero-displacement Gaussian channels}

\label{sec:first-step-zero-mean-etc}

As mentioned above, we employ Ansatz~1 in this first step, in which we work
with the explicit form of the Petz map in~\eqref{Petz}, in spite of the fact
that the inverse of a Gaussian density operator is unbounded. Let $\sigma$ be
a quantum Gaussian state with mean vector $s_{\sigma}$ and covariance matrix
$V_{\sigma}$, and let $\mathcal{N}$ be a quantum Gaussian channel with the
action on an input state as described in~\eqref{N cm}.

In this first step, we show how it suffices to consider the case $s_{\sigma
}=\delta=0$ in~\eqref{Petz}. To see this, consider the action of the Petz map
$\mathcal{P}_{\sigma,\mathcal{N}}$ on an arbitrary input state $\omega$:%
\begin{align}
\mathcal{P}_{\sigma,\mathcal{N}}(\omega)  &  =\sigma^{1/2}\mathcal{N}^{\dag
}\left(  \mathcal{N}(\sigma)^{-1/2}\omega\mathcal{N}(\sigma)^{-1/2}\right)
\sigma^{1/2}\,\\
&  =\left(  D_{s_{\sigma}}^{\dag}\sigma_{0}^{1/2}D_{s_{\sigma}}\right)
\mathcal{N}_{0}^{\dag}\left[  D_{\delta}D_{Xs_{\sigma}+\delta}^{\dag
}\mathcal{N}_{0}(\sigma_{0})^{-1/2}D_{Xs_{\sigma}+\delta}\,\omega
\,D_{Xs_{\sigma}+\delta}^{\dag}\mathcal{N}_{0}(\sigma_{0})^{-1/2}%
D_{Xs_{\sigma}+\delta}D_{\delta}^{\dag}\right] \nonumber\\
&  \qquad\times\left(  D_{s_{\sigma}}^{\dag}\sigma_{0}^{1/2}D_{s_{\sigma}%
}\right)  \,\\
&  =\left(  D_{s_{\sigma}}^{\dag}\sigma_{0}^{1/2}D_{s_{\sigma}}\right)
\mathcal{N}_{0}^{\dag}\left[  D_{Xs_{\sigma}}^{\dag}\mathcal{N}_{0}(\sigma
_{0})^{-1/2}D_{Xs_{\sigma}+\delta}\,\omega\,D_{Xs_{\sigma}+\delta}^{\dag
}\mathcal{N}_{0}(\sigma_{0})^{-1/2}D_{Xs_{\sigma}}\right] \nonumber\\
&  \qquad\times\left(  D_{s_{\sigma}}^{\dag}\sigma_{0}^{1/2}D_{s_{\sigma}%
}\right)  \,\\
&  =D_{s_{\sigma}}^{\dag}\sigma_{0}^{1/2}D_{s_{\sigma}}D_{X^{-1}(Xs_{\sigma}%
)}^{\dag}\mathcal{N}_{0}^{\dag}\left[  \mathcal{N}_{0}(\sigma_{0}%
)^{-1/2}D_{Xs_{\sigma}+\delta}\,\omega\,D_{Xs_{\sigma}+\delta}^{\dag
}\mathcal{N}_{0}(\sigma_{0})^{-1/2}\right] \nonumber\\
&  \qquad\times D_{X^{-1}(Xs_{\sigma})}D_{s_{\sigma}}^{\dag}\sigma_{0}%
^{1/2}D_{s_{\sigma}}\,\\
&  =D_{s_{\sigma}}^{\dag}\sigma_{0}^{1/2}\mathcal{N}_{0}^{\dag}\left[
\mathcal{N}_{0}(\sigma_{0})^{-1/2}D_{Xs_{\sigma}+\delta}\,\omega
\,D_{Xs_{\sigma}+\delta}^{\dag}\mathcal{N}_{0}(\sigma_{0})^{-1/2}\right]
\sigma_{0}^{1/2}D_{s_{\sigma}}\,\\
&  =D_{s_{\sigma}}^{\dag}\ \mathcal{P}_{\sigma_{0},\mathcal{N}_{0}}\left(
D_{Xs_{\sigma}+\delta}\,\omega\,D_{Xs_{\sigma}+\delta}^{\dag}\right)
\,D_{s_{\sigma}}\,. \label{Petz to Petz0}%
\end{align}
For the first equality, we use the definition of the Petz map and Ansatz~1.
The second equality follows from
\eqref{eq:state-means-out}--\eqref{eq:channel-state-means-out} and the fact
that $f(UAU^{\dag})=Uf(A)U^{\dag}$ for a function $f$, a unitary operator $U$,
and a Hermitian operator $A$. The third equality follows because $D_{\delta
}D_{Xs_{\sigma}+\delta}^{\dag}=D_{Xs_{\sigma}}^{\dag}e^{i\phi}$ for $\phi$ a
phase. The fourth equality follows from the adjoint channel covariance
relation in~\eqref{eq:adjoint-channel-cov-disps} and Ansatz~3. The fifth
equality follows because $D_{s_{\sigma}}D_{X^{-1}(Xs_{\sigma})}^{\dag
}=e^{i\varphi}I$ for some phase $\varphi$. The final equality follows by
recognizing the form of the Petz map $\mathcal{P}_{\sigma_{0},\mathcal{N}_{0}%
}$, corresponding to the zero-mean state $\sigma_{0}$ and the
zero-displacement channel $\mathcal{N}_{0}$.

The above reasoning suggests that we should focus on determining an explicit
form for $\mathcal{P}_{\sigma_{0},\mathcal{N}_{0}}(\omega)$. That is, the
above reasoning suggests that an arbitrary Petz map $\mathcal{P}%
_{\sigma,\mathcal{N}}$ can be realized as a serial concatenation of the
displacement $D_{Xs_{\sigma}+\delta}$, the Petz map $\mathcal{P}_{\sigma
_{0},\mathcal{N}_{0}}$, and the displacement $D_{s_{\sigma}}^{\dag}$. After we
give an explicit form for $\mathcal{P}_{\sigma_{0},\mathcal{N}_{0}}$ as a
quantum Gaussian channel with matrices $X_{P}$ and $Y_{P}$, it should become
clear why the displacement $\delta_{P}$\ in the Petz map $\mathcal{P}%
_{\sigma,\mathcal{N}}$ has the form in~\eqref{eq:Petz-displacement}.

\subsection{Step 2:\ Deducing a hypothesis for an explicit form for the Petz
map, by considering Gaussian input states}

In this step, we continue working with Ansatzes~1-3, with our main objective
being to arrive at a hypothesis for the action of the Petz recovery map
$\mathcal{P}_{\sigma_{0},\mathcal{N}_{0}}$ on the mean vector and covariance
matrix of an input Gaussian state. Here we consider the serial concatenation
of the three completely positive maps in~\eqref{eq:Petz-1}--\eqref{eq:Petz-3}.
We begin by considering the action of the last completely positive map on a
zero-mean Gaussian input state $\omega_{0}$. To this end, recall from~\cite{PS00} and 
\cite[Appendix~C]{Pirandola} that if $\omega_{0}$ and $\sigma_{0}$ are
zero-mean Gaussian states, then $\sqrt{\sigma_{0}}\omega_{0}\sqrt{\sigma_{0}}$
is an (unnormalized) Gaussian operator with zero mean vector and covariance
matrix given by%
\begin{equation}
V_{\sqrt{\sigma_{0}}\omega_{0}\sqrt{\sigma_{0}}}=V_{\sigma_{0}}-\left(
V_{\sqrt{\sigma_{0}}}-V_{\sigma_{0}}\right)  \left(  V_{\omega_{0}}%
+V_{\sigma_{0}}\right)  ^{-1}\left(  V_{\sqrt{\sigma_{0}}}-V_{\sigma_{0}%
}\right)  .\label{eq:sqrt-sandwich}%
\end{equation}
Applying a formula from~\cite[Lemma~2]{Kholevo1972} and \cite[Section~III]{PS00}, we find that%
\begin{equation}
V_{\sqrt{\sigma_{0}}}=\left(  \sqrt{I+\left(  V_{\sigma_{0}}\Omega\right)
^{-2}}+I\right)  V_{\sigma_{0}}\,,\label{CM sqrt G}%
\end{equation}
which is a symmetric matrix because $V_{\sigma_{0}}$ is. Indeed, consider that%
\begin{align}
V_{\sqrt{\sigma_{0}}}^{T}  & =\left[  \left(  \sqrt{I+\left(  V_{\sigma_{0}%
}\Omega\right)  ^{-2}}+I\right)  V_{\sigma_{0}}\right]  ^{T}=V_{\sigma_{0}%
}\left(  \sqrt{I+\left(  \Omega V_{\sigma_{0}}\right)  ^{-2}}+I\right)  \\
& =\Omega^{-1}\Omega V_{\sigma_{0}}\left(  \sqrt{I+\left(  \Omega
V_{\sigma_{0}}\right)  ^{-2}}+I\right)  =\Omega^{-1}\left(  \sqrt{I+\left(
\Omega V_{\sigma_{0}}\right)  ^{-2}}+I\right)  \Omega V_{\sigma_{0}}\\
& =\left(  \sqrt{\Omega^{-1}\left[  I+\left(  \Omega V_{\sigma_{0}}\right)
^{-2}\right]  \Omega}+I\right)  V_{\sigma_{0}}=\left(  \sqrt{\left[  I+\left(
\Omega^{-1}\Omega V_{\sigma_{0}}\Omega\right)  ^{-2}\right]  }+I\right)
V_{\sigma_{0}}\\
& =\left(  \sqrt{I+\left(  V_{\sigma_{0}}\Omega\right)  ^{-2}}+I\right)
V_{\sigma_{0}}=V_{\sqrt{\sigma_{0}}}.
\end{align}
The equality in~\eqref{CM sqrt G}\ implies that%
\begin{equation}
V_{\sqrt{\sigma_{0}}}-V_{\sigma_{0}}=\sqrt{I+\left(  V_{\sigma_{0}}%
\Omega\right)  ^{-2}}V_{\sigma_{0}},
\end{equation}
and in turn, after substituting into~\eqref{eq:sqrt-sandwich}, that%
\begin{equation}
V_{\sqrt{\sigma_{0}}\omega_{0}\sqrt{\sigma_{0}}}=V_{\sigma_{0}}-\sqrt
{I+\left(  V_{\sigma_{0}}\Omega\right)  ^{-2}}V_{\sigma_{0}}\left(
V_{\omega_{0}}+V_{\sigma_{0}}\right)  ^{-1}V_{\sigma_{0}}\sqrt{I+\left(
\Omega V_{\sigma_{0}}\right)  ^{-2}}.\label{cm sqrt sandwiched}%
\end{equation}
Thus,~\eqref{cm sqrt sandwiched} establishes the action of the completely
positive map $(\cdot)\rightarrow\sqrt{\sigma_{0}}(\cdot)\sqrt{\sigma_{0}}$ on
an arbitrary zero-mean Gaussian state $\omega_{0}$.

From this discussion we already start seeing that the Petz map constructed out
of a Gaussian state $\sigma$\ and a Gaussian channel $\mathcal{N}$\ should
send normalized Gaussian states to normalized Gaussian states, because (i)
conjugation by the square root of a Gaussian state (or the inverse square root
of a Gaussian state as we will see) preserves the Gaussian form; (ii) the
adjoint of a Gaussian channel is still Gaussian; and (iii) the Petz map is a
priori known to be trace-preserving whenever $\mathcal{N}(\sigma)$ is a
faithful state~\cite{petz86,petz88,PETZ}. Then,~\cite[Theorem~III.1]{dePalma}
ensures that $\mathcal{P}$ must act as in~\eqref{N cm}, for some $X_{P}$,
$Y_{P}$, and $\delta_{P}$ to be determined.

With this preliminary identity in hand, we are ready to determine a hypothesis
for the explicit action of $\mathcal{P}_{\sigma_{0},\mathcal{N}_{0}}$.
For the sake of simplicity, we consider the input Gaussian state to have
vanishing first moments. In any case, since we are working to deduce a
hypothesis for an explicit form for the Petz map, this is by no means a loss
of generality. By applying~\eqref{cm sqrt sandwiched}\ and\ Ansatz~2 (that the
following density operator transformation $\omega\rightarrow\omega^{-1}$
induces the transformation $V_{\omega}\rightarrow-V_{\omega}$ on the level of
covariance matrices), we can conclude that the completely positive map in
\eqref{eq:Petz-1} has the following effect on covariance matrices:%
\begin{multline}
V_{\sqrt{\mathcal{N}_{0}(\sigma_{0})}^{-1}\omega_{0}\sqrt{\mathcal{N}%
_{0}(\sigma_{0})}^{-1}}\\
=-V_{\mathcal{N}(\sigma)}-\sqrt{I+\left(  V_{\mathcal{N}(\sigma)}%
\Omega\right)  ^{-2}}V_{\mathcal{N}(\sigma)}\left(  V_{\omega}-V_{\mathcal{N}%
(\sigma)}\right)  ^{-1}V_{\mathcal{N}(\sigma)}\sqrt{I+\left(  \Omega
V_{\mathcal{N}(\sigma)}\right)  ^{-2}}.
\end{multline}
In the above, we have also used the identities $V_{\mathcal{N}_{0}(\sigma
_{0})}=V_{\mathcal{N}(\sigma)}$ and $V_{\omega_{0}}=V_{\omega}$. So now we
consider further concatenating with the completely positive map in
\eqref{eq:Petz-2}, by applying~\eqref{N dag cm} and Ansatz~3 (that $X$ is
invertible):%
\begin{multline}
V_{\mathcal{N}_{0}^{\dag}(\sqrt{\mathcal{N}_{0}(\sigma_{0})}^{-1}\omega
_{0}\sqrt{\mathcal{N}_{0}(\sigma_{0})}^{-1})}%
=\label{eq:adjoint-and-inverse-sqrt}\\
X^{-1}\left[  -V_{\mathcal{N}(\sigma)}-\sqrt{I+\left(  V_{\mathcal{N}(\sigma
)}\Omega\right)  ^{-2}}V_{\mathcal{N}(\sigma)}\left(  V_{\omega}%
-V_{\mathcal{N}(\sigma)}\right)  ^{-1}V_{\mathcal{N}(\sigma)}\sqrt{I+\left(
\Omega V_{\mathcal{N}(\sigma)}\right)  ^{-2}}+Y\right]  X^{-T}.
\end{multline}
But consider that $V_{\mathcal{N}(\sigma)}=XV_{\sigma}X^{T}+Y$, so that
\eqref{eq:adjoint-and-inverse-sqrt} simplifies as follows:%
\begin{align}
&  V_{\mathcal{N}_{0}^{\dag}(\sqrt{\mathcal{N}_{0}(\sigma_{0})}^{-1}\omega
_{0}\sqrt{\mathcal{N}_{0}(\sigma_{0})}^{-1})}\nonumber\\
&  =X^{-1}\Bigg[-\left(  XV_{\sigma}X^{T}+Y\right)  -\sqrt{I+\left(
V_{\mathcal{N}(\sigma)}\Omega\right)  ^{-2}}V_{\mathcal{N}(\sigma)}\left(
V_{\omega}-V_{\mathcal{N}(\sigma)}\right)  ^{-1}V_{\mathcal{N}(\sigma)}%
\sqrt{I+\left(  \Omega V_{\mathcal{N}(\sigma)}\right)  ^{-2}}+Y\Bigg]X^{-T}%
\nonumber\\
&  =X^{-1}\left[  -XV_{\sigma}X^{T}-\sqrt{I+\left(  V_{\mathcal{N}(\sigma
)}\Omega\right)  ^{-2}}V_{\mathcal{N}(\sigma)}\left(  V_{\omega}%
-V_{\mathcal{N}(\sigma)}\right)  ^{-1}V_{\mathcal{N}(\sigma)}\sqrt{I+\left(
\Omega V_{\mathcal{N}(\sigma)}\right)  ^{-2}}\right]  X^{-T}\\
&  =-V_{\sigma}-X^{-1}\sqrt{I+\left(  V_{\mathcal{N}(\sigma)}\Omega\right)
^{-2}}V_{\mathcal{N}(\sigma)}\left(  V_{\omega}-V_{\mathcal{N}(\sigma
)}\right)  ^{-1}V_{\mathcal{N}(\sigma)}\sqrt{I+\left(  \Omega V_{\mathcal{N}%
(\sigma)}\right)  ^{-2}}X^{-T}.
\end{align}
So then we can finally consider the serial concatenation of the three
completely positive maps in~\eqref{eq:Petz-1}--\eqref{eq:Petz-3}:%
\begin{align}
&  V_{\sqrt{\sigma_{0}}\mathcal{N}_{0}^{\dag}(\sqrt{\mathcal{N}_{0}(\sigma
_{0})}^{-1}\omega_{0}\sqrt{\mathcal{N}_{0}(\sigma_{0})}^{-1})\sqrt{\sigma_{0}%
}}\nonumber\\
&  =V_{\sigma}-\sqrt{I+\left(  V_{\sigma}\Omega\right)  ^{-2}}V_{\sigma
}\nonumber\\
&  \qquad\times\left(  -V_{\sigma}-X^{-1}\sqrt{I+\left(  V_{\mathcal{N}%
(\sigma)}\Omega\right)  ^{-2}}V_{\mathcal{N}(\sigma)}\left(  V_{\omega
}-V_{\mathcal{N}(\sigma)}\right)  ^{-1}V_{\mathcal{N}(\sigma)}\sqrt{I+\left(
\Omega V_{\mathcal{N}(\sigma)}\right)  ^{-2}}X^{-T}+V_{\sigma}\right)
^{-1}\nonumber\\
&  \qquad\times V_{\sigma}\sqrt{I+\left(  \Omega V_{\sigma}\right)  ^{-2}}%
\end{align}%
\begin{align}
&  =V_{\sigma}-\sqrt{I+\left(  V_{\sigma}\Omega\right)  ^{-2}}V_{\sigma
}\nonumber\\
&  \qquad\times\left(  -X^{-1}\sqrt{I+\left(  V_{\mathcal{N}(\sigma)}%
\Omega\right)  ^{-2}}V_{\mathcal{N}(\sigma)}\left(  V_{\omega}-V_{\mathcal{N}%
(\sigma)}\right)  ^{-1}V_{\mathcal{N}(\sigma)}\sqrt{I+\left(  \Omega
V_{\mathcal{N}(\sigma)}\right)  ^{-2}}X^{-T}\right)  ^{-1}\nonumber\\
&  \qquad\times V_{\sigma}\sqrt{I+\left(  \Omega V_{\sigma}\right)  ^{-2}}\\
&  =V_{\sigma}+\sqrt{I+\left(  V_{\sigma}\Omega\right)  ^{-2}}V_{\sigma}%
X^{T}\sqrt{I+\left(  \Omega V_{\mathcal{N}(\sigma)}\right)  ^{-2}}%
^{-1}V_{\mathcal{N}(\sigma)}^{-1}\left(  V_{\omega}-V_{\mathcal{N}(\sigma
)}\right)  \nonumber\\
&  \qquad\times V_{\mathcal{N}(\sigma)}^{-1}\sqrt{I+\left(  V_{\mathcal{N}%
(\sigma)}\Omega\right)  ^{-2}}^{-1}XV_{\sigma}\sqrt{I+\left(  V_{\sigma}%
\Omega\right)  ^{-2}}.\label{eq:last-step-explicit-petz}%
\end{align}

An inspection of~\eqref{eq:last-step-explicit-petz}\ above suggests that the
Petz map $\mathcal{P}_{\sigma_{0},\mathcal{N}_{0}}$\ is a quantum Gaussian
channel with the following action on an input covariance matrix $V_{\omega}$:%
\begin{equation}
V_{\mathcal{P}_{\sigma_{0},\mathcal{N}_{0}}(\omega_{0})}=X_{P}V_{\omega}%
X_{P}^{T}+Y_{P},\label{Petz w_0}%
\end{equation}
where%
\begin{align}
X_{P} &  \equiv\sqrt{I+\left(  V_{\sigma}\Omega\right)  ^{-2}}V_{\sigma}%
X^{T}\sqrt{I+\left(  \Omega V_{\mathcal{N}(\sigma)}\right)  ^{-2}}%
^{-1}V_{\mathcal{N}(\sigma)}^{-1},\label{XP}\\
Y_{P} &  \equiv V_{\sigma}-X_{P}V_{\mathcal{N}(\sigma)}X_{P}^{T}.\label{YP}%
\end{align}
Combining with the development in Section~\ref{sec:first-step-zero-mean-etc},
the results in~\eqref{Petz w_0},~\eqref{Petz to Petz0} and
\cite[Theorem~III.1]{dePalma} imply that in general
\begin{equation}
\mathcal{P}_{\sigma,\mathcal{N}}:\ \left\{
\begin{array}
[c]{lcl}%
V & \longmapsto & X_{P}VX_{P}^{T}+Y_{P}\\
s & \longmapsto & X_{P}s+\delta_{P}%
\end{array}
\right.  ,\label{Petz cm}%
\end{equation}
where%
\begin{equation}
\delta_{P}\equiv s_{\sigma}-X_{P}\left(  Xs_{\sigma}+\delta\right)
\,,\label{delta_P}%
\end{equation}
and $\delta$ is the vector appearing in~\eqref{N cm}; it follows because%
\begin{equation}
\mathcal{P}_{\sigma,\mathcal{N}}(\omega)=D_{s_{\sigma}}^{\dag}\mathcal{P}%
_{\sigma_{0},\mathcal{N}_{0}}\left(  D_{Xs_{\sigma}+\delta}\omega
D_{Xs_{\sigma}+\delta}^{\dag}\right)  D_{s_{\sigma}},
\end{equation}
which implies that%
\begin{equation}
s_{\mathcal{P}_{\sigma,\mathcal{N}}(\omega)}=X_{P}(s_{\omega}-Xs_{\sigma
}-\delta)+s_{\sigma}.
\end{equation}
So by using Ansatzes~1-3, we have arrived at our hypothesis
\eqref{Petz cm}\ for the Gaussian form of the Petz map $\mathcal{P}%
_{\sigma,\mathcal{N}}$. In the next two sections, we give a detailed proof
that the Gaussian channel specified in~\eqref{Petz cm}\ is indeed equal to the
Petz map $\mathcal{P}_{\sigma,\mathcal{N}}$.

\subsection{Step 3: The Gaussian Petz map satisfies the Petz equations for all
Hilbert--Schmidt operators}

\label{subsec Hilbert-Schmidt}

In this section, we prove that the hypothesis~\eqref{Petz cm} for the Petz map
satisfies the equations in~\eqref{eq:Petz-equations}\ for all Hilbert--Schmidt
operators. Recall that an operator $T$ is Hilbert--Schmidt if%
\begin{equation}
\left\Vert T\right\Vert _{2}\equiv\sqrt{\operatorname{Tr}[T^{\dag}T]}<\infty.
\end{equation}
It can be shown that the Hilbert-Schmidt operators defined on a given Hilbert space form a Hilbert space themselves, once equipped with the product $\langle T_{1}, T_{2}\rangle\equiv \operatorname{Tr} [T_{1}^{\dag} T_{2}]$
\cite{HOLEVO}.
Let $T$ act on a tensor product of $n$ separable Hilbert spaces (i.e., $n$
modes). Its characteristic function is defined by%
\begin{equation}
\chi_{T}(w)=\operatorname{Tr}[TD_{-w}],
\end{equation}
where $w\in\mathbb{R}^{2n}$. Thus, we can write $T$ in terms of its
characteristic function as%
\begin{equation}
T=\int\frac{d^{2n}w}{(2\pi)^{n}}\ \chi_{T}(w)\ D_{w}.
\end{equation}
In fact, the above one-to-one mapping between operators and characteristic functions can be viewed as an isometry between two a priori very different Hilbert spaces, namely that formed by all Hilbert-Schmidt operators on $n$ modes, and that formed by all complex-valued, square-integrable functions $\mathds{R}^{2n}\rightarrow \mathds{C}$, customarily denoted by $\mathcal{L}^{2}\left(\mathds{R}^{2n}\right)$ \cite[Theorem 5.3.3]{HOLEVO}.

Suppose that $T_{1}$ and $T_{2}$ are Hilbert--Schmidt operators. In order to
demonstrate that our hypothesis~\eqref{Petz cm} for $\mathcal{P}%
_{\sigma,\mathcal{N}}$ is in fact correct, we first show that the following
equation is satisfied for this choice and for all Hilbert--Schmidt $T_{1}$ and
$T_{2}$:%
\begin{equation}
\langle T_{2},\mathcal{N}^{\dag}(T_{1})\rangle_{\sigma}=\langle\mathcal{P}%
_{\sigma,\mathcal{N}}^{\dag}(T_{2}),T_{1}\rangle_{\mathcal{N}(\sigma
)}.\label{eq:Petz-HS-ops}%
\end{equation}
Using definitions and an expansion of $T_{1}$ and $T_{2}$ in terms of their
characteristic functions $\chi_{T_{1}}(w_{1})$ and $\chi_{T_{2}}(w_{2})$,
respectively, where $w_{1},w_{2}\in\mathbb{R}^{2n}$, we find that
\eqref{eq:Petz-HS-ops} is equivalent to%
\begin{multline}
\int\int\frac{d^{2n}w_{1}\ d^{2n}w_{2}}{(2\pi)^{2n}}\chi_{T_{2}}^{\ast}%
(w_{2})\chi_{T_{1}}(w_{1})\operatorname{Tr}[\sigma^{1/2}D_{-w_{2}}\sigma
^{1/2}\mathcal{N}^{\dag}(D_{w_{1}})]\\
=\int\int\frac{d^{2n}w_{1}\ d^{2n}w_{2}}{(2\pi)^{2n}}\chi_{T_{2}}^{\ast}%
(w_{2})\chi_{T_{1}}(w_{1})\operatorname{Tr}[\mathcal{P}_{\sigma,\mathcal{N}%
}^{\dag}(D_{-w_{2}})\mathcal{N}(\sigma)^{1/2}D_{w_{1}}\mathcal{N}%
(\sigma)^{1/2}].
\end{multline}
Thus, if we show that the following holds for all $w_{1},w_{2}\in
\mathbb{R}^{2n}$%
\begin{equation}
\operatorname{Tr}[\sigma^{1/2}D_{-w_{2}}\sigma^{1/2}\mathcal{N}^{\dag
}(D_{w_{1}})]=\operatorname{Tr}[\mathcal{P}_{\sigma,\mathcal{N}}^{\dag
}(D_{-w_{2}})\mathcal{N}(\sigma)^{1/2}D_{w_{1}}\mathcal{N}(\sigma
)^{1/2}],\label{eq:displacement-relation-Petz-eq}%
\end{equation}
then the statement in~\eqref{eq:Petz-HS-ops}\ is shown for all
Hilbert--Schmidt operators. So we proceed with proving~\eqref{eq:displacement-relation-Petz-eq}.

We first show that it suffices to verify
\eqref{eq:displacement-relation-Petz-eq} when $\sigma$ is a zero-mean Gaussian
state and $\mathcal{N}$ is a zero-displacement Gaussian channel. Here we make
use of~\eqref{eq:state-means-out},~\eqref{eq:channel-means-out}, and
\eqref{eq:channel-state-means-out}. Consider that%
\begin{align}
\sigma^{1/2}  &  =D_{s_{\sigma}}^{\dag}\sigma_{0}^{1/2}D_{s_{\sigma}},\\
\mathcal{N}(\sigma)  &  =D_{Xs_{\sigma}+\delta}^{\dag}\mathcal{N}_{0}%
(\sigma_{0})D_{Xs_{\sigma}+\delta},\\
\mathcal{N}(\cdot)  &  =D_{\delta}^{\dag}\mathcal{N}_{0}(\cdot)D_{\delta},\\
\mathcal{P}_{\sigma,\mathcal{N}}(\cdot)  &  =D_{\delta_{P}}^{\dag}%
\mathcal{P}_{\sigma_{0},\mathcal{N}_{0}}(\cdot)D_{\delta_{P}},
\end{align}
where $\delta_{P}$ is defined as in~\eqref{eq:Petz-displacement}. We can then
rewrite the left-hand side of~\eqref{eq:displacement-relation-Petz-eq} as%
\begin{align}
&  \operatorname{Tr}[\sigma^{1/2}D_{-w_{2}}\sigma^{1/2}\mathcal{N}^{\dag
}(D_{w_{1}})]\nonumber\\
&  =\operatorname{Tr}[\mathcal{N}(\sigma^{1/2}D_{-w_{2}}\sigma^{1/2})D_{w_{1}%
}]\\
&  =\operatorname{Tr}[D_{\delta}^{\dag}\mathcal{N}_{0}(D_{s_{\sigma}}^{\dag
}\sigma_{0}^{1/2}D_{s_{\sigma}}D_{-w_{2}}D_{s_{\sigma}}^{\dag}\sigma_{0}%
^{1/2}D_{s_{\sigma}})D_{\delta}D_{w_{1}}]\\
&  =\operatorname{Tr}[D_{Xs_{\sigma}+\delta}^{\dag}\mathcal{N}_{0}(\sigma
_{0}^{1/2}D_{s_{\sigma}}D_{-w_{2}}D_{s_{\sigma}}^{\dag}\sigma_{0}%
^{1/2})D_{Xs_{\sigma}+\delta}D_{w_{1}}]\\
&  =\operatorname{Tr}[\mathcal{N}_{0}(\sigma_{0}^{1/2}D_{s_{\sigma}}D_{-w_{2}%
}D_{s_{\sigma}}^{\dag}\sigma_{0}^{1/2})D_{Xs_{\sigma}+\delta}D_{w_{1}%
}D_{Xs_{\sigma}+\delta}^{\dag}]\\
&  =\exp(-i\left(  Xs_{\sigma}+\delta\right)  ^{T}\Omega w_{1}+is_{\sigma}%
^{T}\Omega w_{2})\operatorname{Tr}[\mathcal{N}_{0}(\sigma_{0}^{1/2}D_{-w_{2}%
}\sigma_{0}^{1/2})D_{w_{1}}]\\
&  =\exp(-i\left(  Xs_{\sigma}+\delta\right)  ^{T}\Omega w_{1}+is_{\sigma}%
^{T}\Omega w_{2})\operatorname{Tr}[\sigma_{0}^{1/2}D_{-w_{2}}\sigma_{0}%
^{1/2}\mathcal{N}_{0}^{\dag}(D_{w_{1}})].
\label{eq:verify-Petz-LHS-zero-means}%
\end{align}
We can rewrite the right-hand side of~\eqref{eq:displacement-relation-Petz-eq}
as%
\begin{align}
&  \operatorname{Tr}[\mathcal{P}_{\sigma,\mathcal{N}}^{\dag}(D_{-w_{2}%
})\mathcal{N}(\sigma)^{1/2}D_{w_{1}}\mathcal{N}(\sigma)^{1/2}]\nonumber\\
&  =\operatorname{Tr}[D_{-w_{2}}\mathcal{P}_{\sigma,\mathcal{N}}%
(\mathcal{N}(\sigma)^{1/2}D_{w_{1}}\mathcal{N}(\sigma)^{1/2})]\\
&  =\operatorname{Tr}[D_{-w_{2}}D_{\delta_{P}}^{\dag}\mathcal{P}_{\sigma
_{0},\mathcal{N}_{0}}(D_{Xs_{\sigma}+\delta}^{\dag}\mathcal{N}_{0}(\sigma
_{0})^{1/2}D_{Xs_{\sigma}+\delta}D_{w_{1}}D_{Xs_{\sigma}+\delta}^{\dag
}\mathcal{N}_{0}(\sigma_{0})^{1/2}D_{Xs_{\sigma}+\delta})D_{\delta_{P}}]\\
&  =\operatorname{Tr}[D_{-w_{2}}D_{\delta_{P}}^{\dag}D_{X_{P}\left[
Xs_{\sigma}+\delta\right]  }^{\dag}\mathcal{P}_{\sigma_{0},\mathcal{N}_{0}%
}(\mathcal{N}_{0}(\sigma_{0})^{1/2}D_{Xs_{\sigma}+\delta}D_{w_{1}%
}D_{Xs_{\sigma}+\delta}^{\dag}\mathcal{N}_{0}(\sigma_{0})^{1/2})D_{X_{P}%
\left[  Xs_{\sigma}+\delta\right]  }D_{\delta_{P}}].
\label{eq:Petz-with-means-RHS-last-line}%
\end{align}
Considering that%
\begin{equation}
D_{X_{P}\left[  Xs_{\sigma}+\delta\right]  }D_{\delta_{P}}=D_{\delta_{P}%
+X_{P}\left[  Xs_{\sigma}+\delta\right]  }e^{i\phi}=D_{s_{\sigma}}e^{i\phi},
\end{equation}
which follows from~\eqref{eq:Petz-displacement}\ and
\eqref{comp displacement}, we find that
\eqref{eq:Petz-with-means-RHS-last-line} is equal to%
\begin{align}
&  \operatorname{Tr}[D_{-w_{2}}D_{s_{\sigma}}^{\dag}\mathcal{P}_{\sigma
_{0},\mathcal{N}_{0}}(\mathcal{N}_{0}(\sigma_{0})^{1/2}D_{Xs_{\sigma}+\delta
}D_{w_{1}}D_{Xs_{\sigma}+\delta}^{\dag}\mathcal{N}_{0}(\sigma_{0}%
)^{1/2})D_{s_{\sigma}}]\nonumber\\
&  =\operatorname{Tr}[D_{s_{\sigma}}D_{-w_{2}}D_{s_{\sigma}}^{\dag}%
\mathcal{P}_{\sigma_{0},\mathcal{N}_{0}}(\mathcal{N}_{0}(\sigma_{0}%
)^{1/2}D_{Xs_{\sigma}+\delta}D_{w_{1}}D_{Xs_{\sigma}+\delta}^{\dag}%
\mathcal{N}_{0}(\sigma_{0})^{1/2})]\\
&  =\exp(-i\left(  Xs_{\sigma}+\delta\right)  ^{T}\Omega w_{1}+is_{\sigma}%
^{T}\Omega w_{2})\operatorname{Tr}[D_{-w_{2}}\mathcal{P}_{\sigma
_{0},\mathcal{N}_{0}}(\mathcal{N}_{0}(\sigma_{0})^{1/2}D_{w_{1}}%
\mathcal{N}_{0}(\sigma_{0})^{1/2})]\\
&  =\exp(-i\left(  Xs_{\sigma}+\delta\right)  ^{T}\Omega w_{1}+is_{\sigma}%
^{T}\Omega w_{2})\operatorname{Tr}[\mathcal{P}_{\sigma_{0},\mathcal{N}_{0}%
}^{\dag}(D_{-w_{2}})\mathcal{N}_{0}(\sigma_{0})^{1/2}D_{w_{1}}\mathcal{N}%
_{0}(\sigma_{0})^{1/2}]. \label{eq:verify-Petz-RHS-zero-means}%
\end{align}
Observe that the phases in~\eqref{eq:verify-Petz-LHS-zero-means} and
\eqref{eq:verify-Petz-RHS-zero-means} are equal. Thus, if the goal is to show
the equality in~\eqref{eq:displacement-relation-Petz-eq}, then our above
development proves that it suffices to establish the following equality:%
\begin{equation}
\operatorname{Tr}[\sigma_{0}^{1/2}D_{-w_{2}}\sigma_{0}^{1/2}\mathcal{N}%
_{0}^{\dag}(D_{w_{1}})]=\operatorname{Tr}[\mathcal{P}_{\sigma_{0}%
,\mathcal{N}_{0}}^{\dag}(D_{-w_{2}})\mathcal{N}_{0}(\sigma_{0})^{1/2}D_{w_{1}%
}\mathcal{N}_{0}(\sigma_{0})^{1/2}]. \label{eq:zero-mean-Petz-eqs}%
\end{equation}
So now we focus on establishing~\eqref{eq:zero-mean-Petz-eqs}.

To begin with, consider from~\eqref{eq:adjoint-on-disps}\ that%
\begin{equation}
\mathcal{N}_{0}^{\dag}(D_{w_{1}})=D_{\Omega X^{T}\Omega^{T}w_{1}}\exp\!\left(
-\frac{1}{4}(\Omega^{T}w_{1})^{T}Y\Omega^{T}w_{1}\right)  .
\end{equation}
Thus, the left-hand side of~\eqref{eq:zero-mean-Petz-eqs} reduces to%
\begin{equation}
\operatorname{Tr}[\sigma_{0}^{1/2}D_{-w_{2}}\sigma_{0}^{1/2}D_{\Omega
X^{T}\Omega^{T}w_{1}}]\exp\!\left(  -\frac{1}{4}(\Omega^{T}w_{1})^{T}%
Y\Omega^{T}w_{1}\right)  . \label{eq:LHS-final-petz}%
\end{equation}
Similarly, from~\eqref{eq:adjoint-on-disps} and~\eqref{Petz cm}, we have that%
\begin{equation}
\mathcal{P}_{\sigma_{0},\mathcal{N}_{0}}^{\dag}(D_{-w_{2}})=D_{-\Omega
X_{P}^{T}\Omega^{T}w_{2}}\exp\!\left(  -\frac{1}{4}(\Omega^{T}w_{2})^{T}%
Y_{P}\Omega^{T}w_{2}\right)  ,
\end{equation}
so that the right-hand side of~\eqref{eq:zero-mean-Petz-eqs} reduces to%
\begin{equation}
\operatorname{Tr}[D_{-\Omega X_{P}^{T}\Omega^{T}w_{2}}\mathcal{N}_{0}%
(\sigma_{0})^{1/2}D_{w_{1}}\mathcal{N}_{0}(\sigma_{0})^{1/2}]\exp\!\left(
-\frac{1}{4}(\Omega^{T}w_{2})^{T}Y_{P}\Omega^{T}w_{2}\right)  .
\label{eq:RHS-final-petz}%
\end{equation}
So we should show the equality of~\eqref{eq:LHS-final-petz} and
\eqref{eq:RHS-final-petz}, in order to establish the equality in~\eqref{eq:zero-mean-Petz-eqs}.

To this end, Lemma~\ref{lemma sqrt sandwich} below is helpful for us. Invoking
it, we find that the left-most expression in~\eqref{eq:LHS-final-petz} reduces
as%
\begin{align}
&  \operatorname{Tr}[\sigma_{0}^{1/2}D_{-w_{2}}\sigma_{0}^{1/2}D_{\Omega
X^{T}\Omega^{T}w_{1}}]\nonumber\\
&  =\exp\!\Bigg(-\frac{1}{4}\left(  \Omega X^{T}\Omega^{T}w_{1}\right)
^{T}\Omega^{T}V_{\sigma}\Omega\Omega X^{T}\Omega^{T}w_{1}-\frac{1}{4}w_{2}%
^{T}\Omega^{T}V_{\sigma}\Omega w_{2}\nonumber\\
&  \qquad\qquad\qquad+\frac{1}{2}\left(  \Omega X^{T}\Omega^{T}w_{1}\right)
^{T}\Omega^{T}\sqrt{I+(V_{\sigma}\Omega)^{-2}}V_{\sigma}\Omega w_{2}\Bigg)\\
&  =\exp\!\left(  -\frac{1}{4}\left(  \Omega^{T}w_{1}\right)  ^{T}XV_{\sigma
}X^{T}\Omega^{T}w_{1}-\frac{1}{4}w_{2}^{T}\Omega^{T}V_{\sigma}\Omega
w_{2}-\frac{1}{2}\left(  \Omega^{T}w_{1}\right)  ^{T}X\sqrt{I+(V_{\sigma
}\Omega)^{-2}}V_{\sigma}\Omega w_{2}\right)  .
\end{align}
So this implies that~\eqref{eq:LHS-final-petz} is equal to%
\begin{multline}
\exp\!\Bigg(-\frac{1}{4}\left(  \Omega^{T}w_{1}\right)  ^{T}XV_{\sigma}%
X^{T}\Omega^{T}w_{1}-\frac{1}{4}w_{2}^{T}\Omega^{T}V_{\sigma}\Omega
w_{2}\label{eq:final-final-LHS-Petz}\\
-\frac{1}{2}\left(  \Omega^{T}w_{1}\right)  ^{T}X\sqrt{I+(V_{\sigma}%
\Omega)^{-2}}V_{\sigma}\Omega w_{2}-\frac{1}{4}(\Omega^{T}w_{1})^{T}%
Y\Omega^{T}w_{1}\Bigg)\\
=\exp\!\left(  -\frac{1}{4}\left(  \Omega^{T}w_{1}\right)  ^{T}V_{\mathcal{N}%
(\sigma)}\Omega^{T}w_{1}-\frac{1}{4}w_{2}^{T}\Omega^{T}V_{\sigma}\Omega
w_{2}-\frac{1}{2}\left(  \Omega^{T}w_{1}\right)  ^{T}X\sqrt{I+(V_{\sigma
}\Omega)^{-2}}V_{\sigma}\Omega w_{2}\right)  .
\end{multline}
Invoking Lemma~\ref{lemma sqrt sandwich} again, we find that the left-most
expression in~\eqref{eq:RHS-final-petz} reduces as%
\begin{align}
&  \operatorname{Tr}[D_{-\Omega X_{P}^{T}\Omega^{T}w_{2}}\mathcal{N}%
_{0}(\sigma_{0})^{1/2}D_{w_{1}}\mathcal{N}_{0}(\sigma_{0})^{1/2}]\nonumber\\
&  =\exp\!\Bigg(-\frac{1}{4}w_{1}^{T}\Omega^{T}V_{\mathcal{N}(\sigma)}\Omega
w_{1}-\frac{1}{4}\left(  \Omega X_{P}^{T}\Omega^{T}w_{2}\right)  ^{T}%
\Omega^{T}V_{\mathcal{N}(\sigma)}\Omega\Omega X_{P}^{T}\Omega^{T}%
w_{2}\nonumber\\
&  \qquad\qquad+\frac{1}{2}w_{1}^{T}\Omega^{T}\sqrt{I+(V_{\mathcal{N}(\sigma
)}\Omega)^{-2}}V_{\mathcal{N}(\sigma)}\Omega\Omega X_{P}^{T}\Omega^{T}%
w_{2}\Bigg)\\
&  =\exp\!\Bigg(-\frac{1}{4}w_{1}^{T}\Omega^{T}V_{\mathcal{N}(\sigma)}\Omega
w_{1}-\frac{1}{4}\left(  \Omega^{T}w_{2}\right)  ^{T}X_{P}V_{\mathcal{N}%
(\sigma)}X_{P}^{T}\Omega^{T}w_{2}\nonumber\\
&  \qquad\qquad-\frac{1}{2}w_{1}^{T}\Omega^{T}\sqrt{I+(V_{\mathcal{N}(\sigma
)}\Omega)^{-2}}V_{\mathcal{N}(\sigma)}X_{P}^{T}\Omega^{T}w_{2}\Bigg).
\end{align}
So this implies that~\eqref{eq:RHS-final-petz} is equal to%
\begin{multline}
\exp\!\Bigg(-\frac{1}{4}w_{1}^{T}\Omega^{T}V_{\mathcal{N}(\sigma)}\Omega
w_{1}-\frac{1}{4}\left(  \Omega^{T}w_{2}\right)  ^{T}X_{P}V_{\mathcal{N}%
(\sigma)}X_{P}^{T}\Omega^{T}w_{2}\\
\qquad-\frac{1}{2}w_{1}^{T}\Omega^{T}\sqrt{I+(V_{\mathcal{N}(\sigma)}%
\Omega)^{-2}}V_{\mathcal{N}(\sigma)}X_{P}^{T}\Omega^{T}w_{2}-\frac{1}%
{4}(\Omega^{T}w_{2})^{T}Y_{P}\Omega^{T}w_{2}\Bigg)\\
=\exp\!\left(  -\frac{1}{4}w_{1}^{T}\Omega^{T}V_{\mathcal{N}(\sigma)}\Omega
w_{1}-\frac{1}{4}\left(  \Omega^{T}w_{2}\right)  ^{T}V_{\sigma}\Omega^{T}%
w_{2}-\frac{1}{2}w_{1}^{T}\Omega^{T}\sqrt{I+(V_{\mathcal{N}(\sigma)}%
\Omega)^{-2}}V_{\mathcal{N}(\sigma)}X_{P}^{T}\Omega^{T}w_{2}\right)  .
\label{eq:final-almost-there}%
\end{multline}
Consider that%
\begin{align}
&  \sqrt{I+(V_{\mathcal{N}(\sigma)}\Omega)^{-2}}V_{\mathcal{N}(\sigma)}%
X_{P}^{T}\nonumber\\
&  =\sqrt{I+(V_{\mathcal{N}(\sigma)}\Omega)^{-2}}V_{\mathcal{N}(\sigma
)}\left(  \sqrt{I+\left(  V_{\sigma}\Omega\right)  ^{-2}}V_{\sigma}X^{T}%
\sqrt{I+\left(  \Omega V_{\mathcal{N}(\sigma)}\right)  ^{-2}}^{-1}%
V_{\mathcal{N}(\sigma)}^{-1}\right)  ^{T}\\
&  =\sqrt{I+(V_{\mathcal{N}(\sigma)}\Omega)^{-2}}V_{\mathcal{N}(\sigma
)}V_{\mathcal{N}(\sigma)}^{-1}\sqrt{I+\left(  V_{\mathcal{N}(\sigma)}%
\Omega\right)  ^{-2}}^{-1}X\sqrt{I+\left(  V_{\sigma}\Omega\right)  ^{-2}%
}V_{\sigma}\\
&  =X\sqrt{I+\left(  V_{\sigma}\Omega\right)  ^{-2}}V_{\sigma},
\end{align}
which finally implies that~\eqref{eq:final-almost-there}\ is equal to%
\begin{multline}
\exp\!\left(  -\frac{1}{4}w_{1}^{T}\Omega^{T}V_{\mathcal{N}(\sigma)}\Omega
w_{1}-\frac{1}{4}\left(  \Omega^{T}w_{2}\right)  ^{T}V_{\sigma}\Omega^{T}%
w_{2}-\frac{1}{2}w_{1}^{T}\Omega^{T}X\sqrt{I+\left(  V_{\sigma}\Omega\right)
^{-2}}V_{\sigma}\Omega^{T}w_{2}\right) \label{eq:final-final-RHS-Petz}\\
=\exp\!\left(  -\frac{1}{4}\left(  \Omega^{T}w\right)  ^{T}V_{\mathcal{N}%
(\sigma)}\Omega^{T}w_{1}-\frac{1}{4}w_{2}^{T}\Omega^{T}V_{\sigma}\Omega
w_{2}-\frac{1}{2}\left(  \Omega^{T}w_{1}\right)  ^{T}X\sqrt{I+\left(
V_{\sigma}\Omega\right)  ^{-2}}V_{\sigma}\Omega w_{2}\right)  .
\end{multline}
Comparing~\eqref{eq:final-final-RHS-Petz}\ with
\eqref{eq:final-final-LHS-Petz}, we see that we have shown the equality in
\eqref{eq:zero-mean-Petz-eqs}, which concludes the proof once
Lemma~\ref{lemma sqrt sandwich} is established.

Before proving Lemma~\ref{lemma sqrt sandwich}, we recall the following
result. Although an analogous formula was already established by
\cite[Lemma~2]{Kholevo1972}
and
\cite[Section~III]{PS00}, we provide a self-contained proof for the sake
of completeness.

\begin{lemma}
[Square root of Gaussian states \cite{Kholevo1972,PS00}]\label{lemma sqrt G} Let $\sigma_{0}$ be a
Gaussian state with vanishing first moments, i.e., $s_{\sigma_{0}}=0$. Then its
uniquely defined square root $\sqrt{\sigma_{0}}$ is a trace class operator
given by
\begin{equation}
\sqrt{\sigma_{0}} = \left(  \det V_{\sqrt{\sigma_{0}}} \right)  ^{1/4}
\int\frac{d^{2n} w}{(2\pi)^{n}} e^{-\frac14 w^{T} V_{\sqrt{\sigma_{0}}} w}
D_{\Omega w} , \label{sqrt G}%
\end{equation}
where $V_{\sqrt{\sigma_{0}}}$ is given by~\eqref{CM sqrt G}.
\end{lemma}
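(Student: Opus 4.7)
The plan is to exploit the fact that the square root of a zero-mean Gaussian state is itself an unnormalized zero-mean Gaussian operator, compute its characteristic function directly, and fix the prefactor from the normalization constraint $\operatorname{Tr}[(\sqrt{\sigma_0})^2]=1$. Since $\sigma_0$ is faithful and Gaussian, it admits the representation $\sigma_0 = Z_{\sigma_0}^{-1}\exp\!\bigl(-\tfrac{1}{2}r^T H_{\sigma_0} r\bigr)$, so the positive operator
\[
\sqrt{\sigma_0} \;=\; Z_{\sigma_0}^{-1/2}\exp\!\Bigl(-\tfrac{1}{4}\, r^T H_{\sigma_0}\, r\Bigr)
\]
is (up to a scalar factor) a Gaussian density operator whose Hamiltonian matrix is $H_{\sigma_0}/2$. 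The trace-class property is immediate: $\sigma_0$ is a Gaussian (hence thermal-like) state whose eigenvalues decay geometrically in each mode, so the eigenvalues $\sqrt{\lambda_k}$ of $\sqrt{\sigma_0}$ are still summable, giving $\|\sqrt{\sigma_0}\|_1 = \operatorname{Tr}[\sqrt{\sigma_0}] < \infty$.

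Next, I would identify the covariance matrix $V_{\sqrt{\sigma_0}}$ of this auxiliary Gaussian operator. Applying \eqref{V coth} to the Hamiltonian matrix $H_{\sigma_0}/2$ gives
\[
V_{\sqrt{\sigma_0}} \;=\; \coth\!\bigl(i\Omega H_{\sigma_0}/4\bigr)\, i\Omega ,
\]
and the hyperbolic half-angle identity $\coth(x/2) = \coth(x)+\sqrt{\coth^2(x)-1}$, combined with $(i\Omega)^2 = I$ and the relation $V_{\sigma_0} = \coth(i\Omega H_{\sigma_0}/2)\, i\Omega$, yields \eqref{CM sqrt G} after a short algebraic manipulation. (The positivity of $-(V_{\sigma_0}\Omega)^2 - I$ inside the square root is guaranteed because the symplectic eigenvalues of a faithful Gaussian state are strictly greater than one.) Alternatively, one may appeal directly to \cite[Lemma~2]{Kholevo1972} and \cite[Section~III]{PS00}, where this identity is derived in detail.

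With $V_{\sqrt{\sigma_0}}$ in hand, the Gaussian operator $\sqrt{\sigma_0}$ has characteristic function of the form
\[
\chi_{\sqrt{\sigma_0}}(w) \;=\; C \exp\!\Bigl(-\tfrac{1}{4}(\Omega w)^T V_{\sqrt{\sigma_0}}\,\Omega w\Bigr)
\]
for a positive constant $C = \operatorname{Tr}[\sqrt{\sigma_0}]$. Inverting the Fourier--Weyl transform and performing the change of variable $w\mapsto \Omega w$ (with $|\det\Omega|=1$) leads to
\[
\sqrt{\sigma_0} \;=\; C \int \frac{d^{2n}w}{(2\pi)^{n}}\, e^{-\frac14 w^T V_{\sqrt{\sigma_0}} w}\, D_{\Omega w} ,
\]
which has precisely the claimed shape modulo determination of $C$.

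The final step is to pin down $C$ using $\operatorname{Tr}[(\sqrt{\sigma_0})^2] = \operatorname{Tr}[\sigma_0] = 1$. Squaring the integral representation, applying the composition rule \eqref{comp displacement} to obtain $D_{\Omega w_1}D_{\Omega w_2} = D_{\Omega(w_1+w_2)} e^{-\frac{i}{2}w_1^T \Omega w_2}$, and using the Hilbert--Schmidt orthogonality \eqref{trace displacement} to collapse one of the integrals via $\delta(w_1+w_2)$, the remaining phase vanishes because $w_1^T \Omega w_1 = 0$ by antisymmetry of $\Omega$. The leftover Gaussian integral is standard and gives $\operatorname{Tr}[(\sqrt{\sigma_0})^2] = C^2 (\det V_{\sqrt{\sigma_0}})^{-1/2}$, so setting this to $1$ yields $C = (\det V_{\sqrt{\sigma_0}})^{1/4}$, completing the proof. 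The only delicate point is the algebraic verification of \eqref{CM sqrt G} via the half-angle identity; everything else reduces to routine Gaussian integration and the two basic displacement identities \eqref{comp displacement}--\eqref{trace displacement}.
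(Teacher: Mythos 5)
Your proof is correct, but it runs in the opposite direction from the paper's. The paper takes the candidate operator $K$ defined by the right-hand side of \eqref{sqrt G}, squares it using the displacement composition rule and a Gaussian integral, and verifies $K^{2}=\sigma_{0}$; the computational heart of that argument is the algebraic identity $V_{\sqrt{\sigma_{0}}}+\Omega^{T}V_{\sqrt{\sigma_{0}}}^{-1}\Omega=2V_{\sigma}$ (equation \eqref{average U = V}), extracted from the explicit form \eqref{CM sqrt G}, which it imports from the cited references. You instead work forward from the exponential representation $\sigma_{0}=Z_{\sigma_{0}}^{-1}\exp(-\tfrac{1}{2}r^{T}H_{\sigma_{0}}r)$: the positive square root is manifestly the Gaussian operator with Hamiltonian matrix $H_{\sigma_{0}}/2$, its covariance matrix follows from \eqref{V coth} together with the half-angle identity $\coth(x/2)=\coth(x)+\sqrt{\coth^{2}(x)-1}$ (which, after the commutation gymnastics with $\Omega$, reproduces \eqref{CM sqrt G}), and the prefactor is fixed by $\operatorname{Tr}[(\sqrt{\sigma_{0}})^{2}]=1$ via a single delta-function integral. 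Your route buys three things: it makes positivity of the candidate operator manifest (the paper's appeal to uniqueness of the square root tacitly needs $K\geq 0$), it derives \eqref{CM sqrt G} rather than citing it, and the final normalization integral is lighter than the paper's double Gaussian integral with the shifted variable $\tilde{y}$. The paper's route is more self-contained at the level of operator identities, since it never invokes the functional calculus on the unbounded exponent $r^{T}H_{\sigma_{0}}r$ and needs only the Fourier--Weyl representation \eqref{G state 2}. Two points in your write-up deserve a sentence more care if this were to be made rigorous: the branch choice in $\sqrt{\coth^{2}(x)-1}=1/\sinh(x)$ must be tracked through the matrix functional calculus (it works because the eigenvalues of $i\Omega H_{\sigma_{0}}$ come in $\pm$ pairs and the resulting expression is even), and the trace-class claim should be anchored to the Williamson normal-mode decomposition, under which $\sqrt{\sigma_{0}}$ factors into one-mode thermal-like operators with geometric spectra.
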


\begin{proof}
Call $K$ the right hand side of~\eqref{sqrt G}. Since the square root is
uniquely defined, it suffices to show that $K^{2}=\sigma_{0}$. In the
following we will use the shorthand $U\equiv V_{\sqrt{\sigma_{0}}}>0$, where
the strict positivity can be readily verified using~\eqref{CM sqrt G}, and is
also a consequence of $U$ being a legitimate quantum covariance matrix. We
obtain
\begin{align}
K^{2} &  =\left(  \left(  \det U\right)  ^{1/4}\int\frac{d^{2n}w}{(2\pi)^{n}%
}e^{-\frac{1}{4}w^{T}Uw}D_{\Omega w}\right)  ^{2}\\
&  =\left(  \det U\right)  ^{1/2}\int\frac{d^{2n}w\,d^{2n}z}{(2\pi)^{2n}%
}e^{-\frac{1}{4}w^{T}Uw-\frac{1}{4}z^{T}Uz}D_{\Omega w}D_{\Omega z}\\
&  =\left(  \det U\right)  ^{1/2}\int\frac{d^{2n}w\,d^{2n}z}{(2\pi)^{2n}%
}e^{-\frac{1}{4}w^{T}Uw-\frac{1}{4}z^{T}Uz-\frac{i}{2}w^{T}\Omega z}%
D_{\Omega(w+z)}.
\end{align}
Let us introduce the new variables $x\equiv w+z$ and $y\equiv\frac{w-z}{2}$,
in terms of which we obtain
\begin{align}
K^{2} &  =\left(  \det U\right)  ^{1/2}\int\frac{d^{2n}x\,d^{2n}y}{(2\pi
)^{2n}}e^{-\frac{1}{8}x^{T}Ux-\frac{1}{2}\left(  y+\frac{i}{2}U^{-1}\Omega
x\right)  ^{T}U\left(  y+\frac{i}{2}U^{-1}\Omega x\right)  -\frac{1}{8}%
x^{T}\Omega^{T}U^{-1}\Omega x}D_{\Omega x}\\
&  =\left(  \det U\right)  ^{1/2}\int\frac{d^{2n}x}{(2\pi)^{n}}e^{-\frac{1}%
{8}x^{T}\left(  U+\Omega^{T}U^{-1}\Omega\right)  x}\left(  \int\frac{d^{2n}%
y}{(2\pi)^{n}}e^{-\frac{1}{2}\left(  y+\frac{i}{2}U^{-1}\Omega x\right)
^{T}U\left(  y+\frac{i}{2}U^{-1}\Omega x\right)  }\right)  D_{\Omega x}\\
&  =\left(  \det U\right)  ^{1/2}\int\frac{d^{2n}x}{(2\pi)^{n}}e^{-\frac{1}%
{4}x^{T}Vx}\left(  \int\frac{d^{2n}\tilde{y}}{(2\pi)^{n}}e^{-\frac{1}{2}%
\tilde{y}^{T}U\tilde{y}}\right)  D_{\Omega x}\\
&  =\int\frac{d^{2n}x}{(2\pi)^{n}}e^{-\frac{1}{4}x^{T}Vx}D_{\Omega x}\\
&  =\sigma_{0},
\end{align}
where we defined the shifted variable $\tilde{y}\equiv y+\frac{i}{2}%
U^{-1}\Omega x$ to perform the internal Gaussian integral and in the last step
we appealed to the representation~\eqref{G state 2}. Moreover, in the above calculation
we observed that
\begin{align}
\Omega^{T}U^{-1}\Omega &  =\Omega U^{-1}\Omega^{T}=\left(  U\Omega\right)
^{-1}\Omega\label{average U = V eq1}\\
&  =\left(  \sqrt{I+(V_{\sigma}\Omega)^{-2}}V_{\sigma}\Omega+V_{\sigma}%
\Omega\right)  ^{-1}\Omega\label{average U = V eq2}\\
&  =\left(  \sqrt{I+(V_{\sigma}\Omega)^{-2}}V_{\sigma}\Omega-V_{\sigma}%
\Omega\right)  \Omega\label{average U = V eq3}\\
&  =V_{\sigma}-\sqrt{I+(V_{\sigma}\Omega)^{-2}}V_{\sigma}%
\label{average U = V eq4}%
\end{align}
and hence
\begin{equation}
U+\Omega^{T}U^{-1}\Omega=2V_{\sigma}.\label{average U = V}%
\end{equation}
This concludes the proof of Lemma~\ref{lemma sqrt G}.
\end{proof}

\begin{lemma}
\label{lemma sqrt sandwich} Let $\sigma_{0}$ be a Gaussian state with
vanishing first moments $s_{\sigma_{0}}=0$. Then for all $x,y\in
\mathds{R}^{2n}$ we have
\begin{align}
\chi_{\sqrt{\sigma_{0}} D_{x} \sqrt{\sigma_{0}}}(y)  &  = \operatorname{Tr}%
\,\left[  D_{-y} \sqrt{\sigma_{0}} D_{x} \sqrt{\sigma_{0}} \right] \\
&  = \exp\! \left(  -\frac14 x^{T} \Omega^{T} V_{\sigma}\Omega x - \frac14
y^{T} \Omega^{T} V_{\sigma}\Omega y + \frac12 x^{T} \Omega^{T} \sqrt{I +
(V_{\sigma}\Omega)^{-2}} V_{\sigma}\Omega y \right)  .
\end{align}

\end{lemma}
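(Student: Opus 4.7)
The plan is to substitute the integral representation of $\sqrt{\sigma_0}$ from Lemma~\ref{lemma sqrt G} into each of the two copies appearing in the trace. Writing $U \equiv V_{\sqrt{\sigma_0}}$, this produces a double Gaussian integral over auxiliary variables $w,z\in\mathbb{R}^{2n}$ whose integrand contains the operator product $D_{-y} D_{\Omega w} D_x D_{\Omega z}$. Two applications of the Weyl composition rule~\eqref{comp displacement} collapse this product into a single displacement $D_{\Omega(w+z)+x-y}$ multiplied by an explicit $w,z,x,y$-dependent phase, and the Hilbert--Schmidt orthogonality~\eqref{trace displacement} then reduces the trace to the delta distribution $(2\pi)^n \delta(\Omega(w+z)+x-y)$.

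I next change variables to $u \equiv w+z$ and $v \equiv (w-z)/2$, which has unit Jacobian and, using $u^T \Omega u = v^T \Omega v = 0$, decouples the Gaussian kernel as $\frac{1}{4}(w^T U w + z^T U z) = \frac{1}{8} u^T U u + \frac{1}{2} v^T U v$. The delta distribution pins $u$ to $\Omega(x-y)$, so the remaining $v$-integral is a standard centred Gaussian whose normalisation $(\det U)^{-1/2}$ cancels exactly the $(\det U)^{1/2}$ prefactor coming from the two copies of~\eqref{sqrt G}. After completing the square in $v$, what survives is the quadratic-in-$x,y$ exponent $-\frac{1}{8}(x-y)^T \Omega^T U \Omega (x-y) - \frac{1}{8}(x+y)^T U^{-1}(x+y)$.

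To recognise this as the claimed expression, I invoke the two auxiliary identities that already appeared in the proof of Lemma~\ref{lemma sqrt G}. Conjugating~\eqref{average U = V} by $\Omega$ and using $\Omega^T\Omega = I$ together with $\Omega^2=-I$ yields $\Omega^T U \Omega + U^{-1} = 2\Omega^T V_\sigma \Omega$, which accounts for the two diagonal terms $-\frac{1}{4} x^T \Omega^T V_\sigma \Omega x$ and $-\frac{1}{4} y^T \Omega^T V_\sigma \Omega y$. Combining this with $U - V_\sigma = \sqrt{I+(V_\sigma\Omega)^{-2}}\, V_\sigma$, which is a direct consequence of~\eqref{CM sqrt G}, gives $\Omega^T U \Omega - U^{-1} = 2\Omega^T \sqrt{I+(V_\sigma\Omega)^{-2}} V_\sigma \Omega$, matching the cross term $+\frac{1}{2} x^T \Omega^T \sqrt{I+(V_\sigma\Omega)^{-2}} V_\sigma \Omega y$. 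The main obstacle is not conceptual but rather a careful bookkeeping of signs and imaginary phases: the three applications of~\eqref{comp displacement} together with the repeated use of $\Omega^T=-\Omega$ and $\Omega^2=-I$ generate several imaginary linear-in-$v$ contributions that must be collected into a single shift of the Gaussian integral, so that the resulting $J^T U^{-1} J$ piece contributes purely to the $x$--$y$ cross coefficient and not to the diagonal ones.
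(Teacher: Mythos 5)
Your proposal is correct and follows essentially the same route as the paper's main-text proof: substitute the integral representation of $\sqrt{\sigma_{0}}$ from Lemma~\ref{lemma sqrt G} twice, use \eqref{comp displacement} and \eqref{trace displacement} to produce a delta function, evaluate the remaining shifted Gaussian integral, and simplify via \eqref{average U = V} and $U-\Omega U^{-1}\Omega^{T}=2\sqrt{I+(V_{\sigma}\Omega)^{-2}}\,V_{\sigma}$; your symmetrised change of variables $u=w+z$, $v=(w-z)/2$ is only a cosmetic variant of the paper's direct elimination of $w$ via the delta. (The paper also gives a second, Lie-algebraic proof in Appendix~\ref{golden}, which your argument does not resemble, but that is an optional alternative.)
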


\begin{proof}
To perform the computation, we just need to employ: (i) the representation
\eqref{sqrt G} for the square root of a Gaussian state with zero mean, (ii)
the composition identity~\eqref{comp displacement}, (iii) the orthogonality
relation~\eqref{trace displacement}; and (iv) the standard formula for a
Gaussian integral, i.e.
\begin{equation}
\int\frac{d^{2n}z}{(2\pi)^{n}}e^{-\frac{1}{2}z^{T}Uz+\frac{1}{2}a^{T} U z}%
=\frac{e^{\frac{1}{8}a^{T}Ua}}{\sqrt{\det U}},\label{Gaussian int}%
\end{equation}
valid for $U>0$. Defining again $U\equiv V_{\sqrt{\sigma_{0}}}$, we obtain
\begin{align}
\chi_{\sqrt{\sigma_{0}}D_{x}\sqrt{\sigma_{0}}}(y) &  =\operatorname{Tr}%
\,\left[  D_{-y}\sqrt{\sigma_{0}}D_{x}\sqrt{\sigma_{0}}\right]  \\
&  =\left(  \det U\right)  ^{1/2}\int\frac{d^{2n}w\,d^{2n}z}{(2\pi)^{2n}}%
\exp\!\left(  -\frac{1}{4}w^{T}Uw-\frac{1}{4}z^{T}Uz\right)  \operatorname{Tr}%
\,[D_{-y}D_{\Omega w}D_{x}D_{\Omega z}]\\
&  =\left(  \det U\right)  ^{1/2}\int\frac{d^{2n}w\,d^{2n}z}{(2\pi)^{2n}}%
\exp\!\left(  -\frac{1}{4}w^{T}Uw-\frac{1}{4}z^{T}Uz+\frac{i}{2}x^{T}%
z-\frac{i}{2}y^{T}w\right)  \nonumber\\
&  \qquad\quad\times\operatorname{Tr}\,[D_{\Omega w-y}D_{\Omega z+x}]\\
&  =\left(  \det U\right)  ^{1/2}\int\frac{d^{2n}w\,d^{2n}z}{(2\pi)^{2n}}%
\exp\!\left(  -\frac{1}{4}w^{T}Uw-\frac{1}{4}z^{T}Uz+\frac{i}{2}x^{T}%
z-\frac{i}{2}y^{T}w\right)  \nonumber\\
&  \qquad\quad\times(2\pi)^{n}\delta\left(  \Omega w-y+\Omega z+x\right)  \\
&  =\left(  \det U\right)  ^{1/2}\int\frac{d^{2n}w\,d^{2n}z}{(2\pi)^{2n}}%
\exp\!\left(  -\frac{1}{4}w^{T}Uw-\frac{1}{4}z^{T}Uz+\frac{i}{2}x^{T}%
z-\frac{i}{2}y^{T}w\right)  \nonumber\\
&  \qquad\quad\times(2\pi)^{n}\delta\left(  w-\Omega(x-y)+z\right)  \\
&  =\left(  \det U\right)  ^{1/2}\int\frac{d^{2n}z}{(2\pi)^{n}}\exp\!\left(
-\frac{1}{4}(\Omega(x-y)-z)^{T}U(\Omega(x-y)-z)\right.  \nonumber\\
&  \qquad\quad\left.  -\frac{1}{4}z^{T}Uz+\frac{i}{2}x^{T}z-\frac{i}{2}%
y^{T}(\Omega(x-y)-z)\right)  \\
&  =\left(  \det U\right)  ^{1/2}\exp\!\left(  -\frac{1}{4}(x-y)^{T}\Omega
^{T}U\Omega(x-y)+\frac{i}{2}x^{T}\Omega y\right)  \nonumber\\
&  \qquad\quad\times\int\frac{d^{2n}z}{(2\pi)^{n}}\exp\!\left(  -\frac{1}%
{2}z^{T}Uz+\frac{1}{2}\left(  \Omega(x-y)+iU^{-1}(x+y)\right)  ^{T}Uz\right)
\\
&  =\exp\!\left(  -\frac{1}{4}(x-y)^{T}\Omega^{T}U\Omega(x-y)+\frac{i}{2}%
x^{T}\Omega y\right)  \nonumber\\
&  \qquad\quad\times\exp\!\left(  \frac{1}{8}\left(  \Omega(x-y)+iU^{-1}%
(x+y)\right)  ^{T}U\left(  \Omega(x-y)+iU^{-1}(x+y)\right)  \right)  \\
&  =\exp\!\left(  -\frac{1}{4}x^{T}\Omega^{T}\frac{U+\Omega U^{-1}\Omega^{T}%
}{2}\Omega x-\frac{1}{4}y^{T}\Omega^{T}\frac{U+\Omega U^{-1}\Omega^{T}}%
{2}\Omega y\right.  \nonumber\\
&  \qquad\quad\left.  +\frac{1}{2}x^{T}\Omega^{T}\frac{U-\Omega U^{-1}%
\Omega^{T}}{2}\Omega y\right)  \\
&  =\exp\!\left(  -\frac{1}{4}x^{T}\Omega^{T}V_{\sigma}\Omega x-\frac{1}%
{4}y^{T}\Omega^{T}V_{\sigma}\Omega y+\frac{1}{2}x^{T}\Omega^{T}\sqrt
{I+(V_{\sigma}\Omega)^{-2}}V_{\sigma}\Omega y\right)  .
\end{align}
In the last step, we used~\eqref{average U = V} and the analogous relation
$U-\Omega U^{-1}\Omega^{T}=2\sqrt{I+(V_{\sigma}\Omega)^{-2}}V_{\sigma}$,
deduced again with the help of~\eqref{average U = V eq4}.
\end{proof}

\subsection{Step 4:\ The Gaussian Petz map satisfies the Petz equations for
all bounded operators}

\label{sec rigorous}

Throughout Section~\ref{subsec Hilbert-Schmidt}, we showed that the Petz
equation in~\eqref{eq:Petz-equations} is satisfied by the Gaussian channel in
\eqref{Petz cm} for all Hilbert--Schmidt operators. In this section we
complete the argument by showing that the same is true for all bounded
operators $A,B$ in~\eqref{eq:Petz-equations}. Thus, as a consequence of the
development in this section, we can conclude from a result of
\cite{petz86,petz88,PETZ}\ that the Gaussian channel in~\eqref{Petz cm} is in
fact the Petz map for $\sigma$ and $\mathcal{N}$.

The argument given here is standard, but we provide it here for completeness.
Proceeding, we have to show that the following Petz equation%
\begin{equation}
\langle A,\mathcal{N}^{\dag}(B)\rangle_{\sigma}\,=\,\langle\mathcal{P}^{\dag
}(A),B\rangle_{\mathcal{N}(\sigma)}\label{Petz eq}%
\end{equation}
is satisfied for all bounded $A,B$, supposing that we can verify it only for a
restricted class of $A,B$, for instance, those which are finite-rank (note
that finite-rank operators are Hilbert--Schmidt). Recall that a sequence
$(T_{n})_{n\in\mathds{N}}$ of operators on a Hilbert space $\mathcal{H}$ is
said to be weakly convergent to $T$, and we write $T_{n}\xrightarrow{w}T$, if
\begin{equation}
\lim_{n\rightarrow\infty}\langle\alpha,T_{n}\beta\rangle\,=\,\langle
\alpha,T\beta\rangle\qquad\forall\ \alpha,\beta\in\mathcal{H}%
\,.\label{weak conv}%
\end{equation}

We start by recalling the well-known fact that finite-rank operators are
weakly dense in the set of bounded operators. It is straightforward to show
this for all bounded $A$: one has $\Pi_{n}A\Pi_{n}\xrightarrow{w}A$, with
$\Pi_{n}$ denoting the projector onto the first $n$ vectors of the canonical
basis. Indeed, taking arbitrary vectors $\alpha,\beta\in\mathcal{H}$, we have
that%
\begin{align}
\big|\langle\Pi_{n}\alpha,A\Pi_{n}\beta\rangle-\langle\alpha,A\beta
\rangle\big| &  =\big|\langle\Pi_{n}\alpha-\alpha,A\beta\rangle+\langle\Pi
_{n}\alpha,A(\Pi_{n}\beta-\beta)\rangle\big|\\
&  \leq\big|\langle\Pi_{n}\alpha-\alpha,A\beta\rangle\big|+\big|\langle\Pi
_{n}\alpha,A(\Pi_{n}\beta-\beta)\rangle\big|\\
&  \leq\Vert A\Vert_{\infty}\big(\Vert\alpha\Vert\,\Vert\Pi_{n}\beta
-\beta\Vert+\Vert\beta\Vert\,\Vert\Pi_{n}\alpha-\alpha\Vert
\big)\xrightarrow[n\rightarrow\infty]{}0\,.
\end{align}

An important tool in our discussion will be the \textit{uniform boundedness
principle}~\cite{B87}, which states that if a sequence of operators
$(T_{n})_{n\in\mathds{N}}$ is such that the sequence of norms $\left(  \Vert
T_{n}\alpha\Vert\right)  _{n\in\mathds{N}}$ is bounded for all $\alpha
\in\mathcal{H}$, then the sequence of operator norms $\Vert T_{n}\Vert
_{\infty}$ is itself bounded.

\begin{lemma}
\label{app lemma 1} Let $(A_{n})_{n\in\mathds{N}}$ be a weakly convergent
sequence of operators. Then the sequence of operator norms $(\|A_{n}%
\|_{\infty})_{n\in\mathds{N}}$ is bounded.
\end{lemma}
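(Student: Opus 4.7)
The plan is to apply the uniform boundedness principle (the Banach--Steinhaus statement just recalled in the excerpt) twice in succession. First I would fix an arbitrary $\beta \in \mathcal{H}$ and aim to show $\sup_{n}\|A_{n}\beta\| < \infty$. The weak-convergence hypothesis gives, for every fixed $\alpha \in \mathcal{H}$, that the numerical sequence $\langle \alpha, A_{n}\beta\rangle$ is convergent in $\mathds{C}$, hence bounded in $n$. Via Riesz representation, each vector $A_{n}\beta$ is dual to the continuous linear functional on $\mathcal{H}$ whose value at $\alpha$ is that same inner product, and this functional has norm equal to $\|A_{n}\beta\|$. Pointwise boundedness of this family of functionals on the Hilbert (hence Banach) space $\mathcal{H}$, combined with the uniform boundedness principle, yields $\sup_{n}\|A_{n}\beta\| < \infty$.

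Since $\beta \in \mathcal{H}$ was arbitrary, the preceding step shows that the sequence of bounded operators $(A_{n})_{n\in\mathds{N}}$ is pointwise bounded, i.e., $\sup_{n}\|A_{n}\beta\| < \infty$ for every $\beta \in \mathcal{H}$. A second invocation of the uniform boundedness principle, this time in precisely the form stated immediately before the lemma and applied directly to the operators $A_{n}\colon \mathcal{H} \to \mathcal{H}$, now produces $\sup_{n}\|A_{n}\|_{\infty} < \infty$, which is the claimed conclusion.

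I do not anticipate any genuine obstacle: the whole argument is a textbook double application of Banach--Steinhaus, and both invocations require only that $\mathcal{H}$ is a Banach space. The only minor point to be careful about is the Riesz identification used in the first step, which is what allows the statement "the inner product $\langle \alpha, A_{n}\beta\rangle$ is bounded in $n$ for every $\alpha$" to be recast as pointwise boundedness of a sequence of bounded linear functionals, thereby matching the hypothesis of the uniform boundedness principle. After that rephrasing, the two applications of the principle proceed mechanically.
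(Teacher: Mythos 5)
Your proof is correct and follows essentially the same route as the paper's: a double application of the uniform boundedness principle, first to the functionals $\alpha \mapsto \langle \alpha, A_n\beta\rangle$ (identified via Riesz with the vectors $A_n\beta$) to get pointwise boundedness of the operators, and then to the operators themselves. The only cosmetic difference is that the paper writes the functionals as $\beta \mapsto \langle A_n\alpha,\beta\rangle$, but the argument is identical.
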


\begin{proof}
Pick an arbitrary $\alpha\in\mathcal{H}$, and consider the sequence of
functionals $f_{n}^{(\alpha)}:\mathcal{H}\rightarrow\mathds{C}$ acting as
$f_{n}^{(\alpha)}(\beta)=\langle A_{n}\alpha, \beta\rangle$. Since
$(A_{n})_{n\in\mathds{N}}$ is weakly convergent, $f_{n}^{(\alpha)}(\beta)$ has
a limit in $\mathds{C}$, and in particular it is bounded. Since this holds for
all $\beta$, the uniform boundedness principle states that the norms
$\|f_{n}^{(\alpha)}\|_{\infty}=\|A_{n}\alpha\|$ must be bounded as well. Since
this holds for an arbitrary $\alpha$, another application of the uniform
boundedness principle guarantees that also $(\|A_{n}\|_{\infty})_{n\in
\mathds{N}}$ is bounded.
\end{proof}

\bigskip

Now we discuss some alternative definitions of weak convergence.

\begin{lemma}
\label{app lemma 2} Given a sequence $(A_{n})_{n\in\mathds{N}}$ of operators
on a Hilbert space, the following are equivalent:

\begin{enumerate}
\item $A_{n}\xrightarrow{w} A$;

\item $\operatorname{Tr}[ \rho A_{n} ]\rightarrow\operatorname{Tr}[ \rho A]$
for all states $\rho$;

\item $\operatorname{Tr}[ Z A_{n} ]\rightarrow\operatorname{Tr}[ ZA]$ for all
trace-class $Z$.
\end{enumerate}
\end{lemma}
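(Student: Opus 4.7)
The plan is to establish the cyclic chain of implications $(3) \Rightarrow (2) \Rightarrow (1) \Rightarrow (3)$, which together with transitivity yield the claimed equivalence of all three conditions.

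The implication $(3) \Rightarrow (2)$ is immediate, since every density operator is by definition a trace-class operator.

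For $(2) \Rightarrow (1)$, I would invoke the polarization identity, which expresses the sesquilinear form $\langle \alpha, A \beta \rangle$ as a linear combination of four quadratic expressions of the form $\langle \gamma, A \gamma \rangle$ with $\gamma$ ranging over $\{\beta + \alpha,\,\beta - \alpha,\,\beta + i \alpha,\,\beta - i \alpha\}$. For each nonzero $\gamma$, the rank-one projector $\rho_\gamma \equiv |\gamma\rangle\langle \gamma|/\|\gamma\|^2$ is a quantum state, and one may write $\langle \gamma, A_n \gamma \rangle = \|\gamma\|^2\, \operatorname{Tr}[\rho_\gamma A_n]$, which converges to $\|\gamma\|^2\, \operatorname{Tr}[\rho_\gamma A] = \langle \gamma, A \gamma \rangle$ by hypothesis $(2)$ (the case $\gamma = 0$ is trivial). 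Assembling the four terms with their polarization coefficients yields $\langle \alpha, A_n \beta \rangle \to \langle \alpha, A \beta \rangle$ for every $\alpha, \beta \in \mathcal{H}$, which is exactly $A_n \xrightarrow{w} A$.

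The main effort lies in proving $(1) \Rightarrow (3)$. Given an arbitrary trace-class operator $Z$, I would use its singular-value decomposition $Z = \sum_k \lambda_k |\phi_k\rangle\langle \psi_k|$, where $\{\phi_k\}$ and $\{\psi_k\}$ are orthonormal systems in $\mathcal{H}$, $\lambda_k \geq 0$, and $\sum_k \lambda_k = \|Z\|_1 < \infty$. This representation gives
\[
\operatorname{Tr}[Z A_n] \,=\, \sum_k \lambda_k\, \langle \psi_k, A_n \phi_k \rangle,
\]
and for each fixed index $k$ the matrix element $\langle \psi_k, A_n \phi_k \rangle$ converges to $\langle \psi_k, A \phi_k \rangle$ by $(1)$. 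The key obstacle, and the whole reason Lemma~\ref{app lemma 1} was recorded, is to justify the interchange of $\lim_{n \to \infty}$ with the (possibly infinite) sum over $k$. Lemma~\ref{app lemma 1} guarantees that $(\|A_n\|_\infty)_{n \in \mathds{N}}$ is uniformly bounded by some constant $C$, so that $|\lambda_k\, \langle \psi_k, A_n \phi_k \rangle| \leq C\, \lambda_k$ with $\sum_k C \lambda_k < \infty$. Dominated convergence for series then yields $\operatorname{Tr}[Z A_n] \to \operatorname{Tr}[Z A]$, completing the cycle.
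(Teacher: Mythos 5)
Your proof is correct, and it traverses the cycle of implications in the opposite direction from the paper, which redistributes where the real work happens. The paper proves $1\Rightarrow 2\Rightarrow 3\Rightarrow 1$: the hard step is $1\Rightarrow 2$, handled by truncating the state $\rho$ with a finite-rank projector $\Pi$ so that $\Vert\rho-\Pi\rho\Pi\Vert_{1}$ is small, and controlling the remainder with the uniform bound on $\Vert A_{n}-A\Vert_{\infty}$ from Lemma~\ref{app lemma 1}; the steps $2\Rightarrow 3$ (every trace-class operator is a combination of four states) and $3\Rightarrow 1$ (take $Z$ rank-one) are then immediate. You instead prove $3\Rightarrow 2\Rightarrow 1\Rightarrow 3$: your hard step is $1\Rightarrow 3$, handled by expanding $Z$ in its singular-value decomposition and interchanging limit and sum via dominated convergence, again with Lemma~\ref{app lemma 1} supplying the dominating bound $C\lambda_{k}$; your easy steps are the triviality of $3\Rightarrow 2$ and the polarization identity for $2\Rightarrow 1$. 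Both arguments lean on Lemma~\ref{app lemma 1} at exactly the point where weak convergence must be upgraded to convergence against trace-class test operators, so neither is more elementary than the other; your SVD-plus-dominated-convergence route is perhaps slightly more self-contained in that it never needs the four-state decomposition of a trace-class operator, while the paper's projector truncation avoids invoking the singular-value decomposition. Everything in your write-up checks out, including the justification for pulling the infinite sum through the trace (continuity of the trace pairing in $\Vert\cdot\Vert_{1}$) and through the limit in $n$.
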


\begin{proof}
\begin{itemize}
\item[$1.\!\Rightarrow\!2.$] Since $A_{n}-A\xrightarrow{w}0$, Lemma
\ref{app lemma 1} ensures that there is a constant $M$ such that for
sufficiently large $n$ $\Vert A_{n}-A\Vert_{\infty}\leq M<\infty$. Since
$\rho$ is a state, for all $\varepsilon>0$ we can fix a projector $\Pi$ onto a
finite-dimensional subspace such that $\Vert\rho-\Pi\rho\Pi\Vert_{1}\leq
\frac{\varepsilon}{2M}$. Moreover, the weak convergence of $A_{n}$ and the
fact that $\Pi\rho\Pi$ has finite support imply that $\operatorname{Tr}%
[\Pi\rho\Pi(A_{n}-A)]<\frac{\varepsilon}{2}$ for sufficiently large $n$. Then
\begin{align}
\big|\operatorname{Tr}[\rho(A_{n}-A)]\big|  &  \leq\big|\operatorname{Tr}%
[\Pi\rho\Pi(A_{n}-A)]\big|+\big|\operatorname{Tr}[(\rho-\Pi\rho\Pi
)(A_{n}-A)]\big|\\
&  \leq\frac{\varepsilon}{2}+\Vert\rho-\Pi\rho\Pi\Vert_{1}\Vert A_{n}%
-A\Vert_{\infty}\leq\frac{\varepsilon}{2}+\frac{\varepsilon}{2M}%
\,M=\varepsilon\,,
\end{align}
for sufficiently large $n$. This shows that $\operatorname{Tr}[\rho
A_{n}]\rightarrow\operatorname{Tr}[\rho A]$.

\item[$2.\!\Rightarrow\!3.$] This follows directly because all trace-class
operators can be written as a complex linear combination of four states.

\item[$3.\!\Rightarrow\!1.$] This implication becomes clear once we choose $Z$
to be the rank-one operator $Zx\equiv(\alpha,x)\beta$ and apply the definition
of weak convergence~\eqref{weak conv}.
\end{itemize}

This concludes the proof.
\end{proof}

\begin{corollary}
\label{app lemma 3} Let $\mathcal{N}$ be a quantum channel. If a sequence of
bounded operators $A_{n}$ satisfies $A_{n}\xrightarrow{w}A$, then
$\mathcal{N}^{\dag}(A_{n})\xrightarrow{w}\mathcal{N}^{\dag}(A)$.
\end{corollary}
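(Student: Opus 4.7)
The plan is to prove the corollary by combining the adjoint relation defining $\mathcal{N}^{\dag}$ with the equivalent characterization of weak convergence given in item~3 of Lemma~\ref{app lemma 2}. The statement ``$\mathcal{N}^{\dag}(A_n) \xrightarrow{w} \mathcal{N}^{\dag}(A)$'' is equivalent to saying that $\operatorname{Tr}[Z \mathcal{N}^{\dag}(A_n)] \to \operatorname{Tr}[Z \mathcal{N}^{\dag}(A)]$ for every trace-class operator $Z$, so this is what I would aim to establish.

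First I would fix an arbitrary trace-class $Z$ and use the defining property of the adjoint to rewrite
\begin{equation}
\operatorname{Tr}[Z \mathcal{N}^{\dag}(A_n)] \,=\, \operatorname{Tr}[\mathcal{N}(Z) A_n],
\end{equation}
and similarly $\operatorname{Tr}[Z \mathcal{N}^{\dag}(A)] = \operatorname{Tr}[\mathcal{N}(Z) A]$. The key small observation is that because $\mathcal{N}$ is a quantum channel and hence completely positive and trace-preserving, it maps the trace class into itself: writing $Z$ as a complex combination of four states and using the fact that $\mathcal{N}$ sends states to states preserves trace-class membership. Therefore $\mathcal{N}(Z)$ is a legitimate trace-class operator, and one can apply the hypothesis $A_n \xrightarrow{w} A$ together with item~3 of Lemma~\ref{app lemma 2} to this particular trace-class operator, obtaining
\begin{equation}
\operatorname{Tr}[\mathcal{N}(Z) A_n] \,\xrightarrow[n\to\infty]{}\, \operatorname{Tr}[\mathcal{N}(Z) A].
\end{equation}
Undoing the adjoint rewriting yields convergence of $\operatorname{Tr}[Z \mathcal{N}^{\dag}(A_n)]$ to $\operatorname{Tr}[Z \mathcal{N}^{\dag}(A)]$ for the arbitrary trace-class $Z$, and a final invocation of Lemma~\ref{app lemma 2} concludes the argument.

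There really is no hard step here; the corollary is essentially a dualization exercise, and the only substantive ingredient is the equivalence of weak convergence with convergence against all trace-class test operators, already packaged in Lemma~\ref{app lemma 2}. The mildest potential subtlety is confirming that $\mathcal{N}$ preserves trace class in the infinite-dimensional setting, but this is automatic from $\mathcal{N}$ being a quantum channel (completely positive and trace-preserving in the Schr\"odinger picture), so no extra work is required. One might note in passing that the corollary would fail if the hypothesis were only that $\mathcal{N}^{\dag}$ were bounded on $\mathcal{B}(\mathcal{H})$; it is the trace-preservation of $\mathcal{N}$ that makes the proof go through.
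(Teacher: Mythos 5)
Your argument is correct and follows essentially the same route as the paper: both proofs dualize via $\operatorname{Tr}[Z\,\mathcal{N}^{\dag}(A_n)]=\operatorname{Tr}[\mathcal{N}(Z)A_n]$ and then invoke Lemma~\ref{app lemma 2}. The only (immaterial) difference is that the paper tests against states and uses condition~2 of that lemma, which sidesteps the small extra step you include of checking that $\mathcal{N}$ maps trace-class operators to trace-class operators.
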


\begin{proof}
We verify condition 2 of Lemma~\ref{app lemma 2}. Pick a state $\rho$. One
has
\begin{equation}
\operatorname{Tr}[ \rho\mathcal{N}^{\dag}(A_{n})] = \operatorname{Tr}[
\mathcal{N}(\rho) A_{n}] \rightarrow\operatorname{Tr}[ \mathcal{N}(\rho) A] =
\operatorname{Tr}[ \rho\mathcal{N}^{\dag}(A)]\, ,
\end{equation}
where we used again condition 2 of Lemma~\ref{app lemma 2} in order to take
the limit.
\end{proof}

\bigskip

Now we come to our decisive tool:

\begin{corollary}
\label{app lemma 4} Let $A_{n}\xrightarrow{w} A$ be a weakly convergent
sequence of operators. Then the following holds for an arbitrary state
$\omega$ and bounded operator $B$:
\begin{equation}
\langle A_{n}, B \rangle_{\omega} \xrightarrow[n\rightarrow\infty]{} \langle
A, B\rangle_{\omega}.
\end{equation}

\end{corollary}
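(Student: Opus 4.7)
The plan is to reduce this statement to condition~3 of Lemma~\ref{app lemma 2}. Unpacking the definition of the weighted inner product, we want to show
\begin{equation}
\operatorname{Tr}\!\left[A_n^{\dag}\,\omega^{1/2} B\, \omega^{1/2}\right] \xrightarrow[n\to\infty]{} \operatorname{Tr}\!\left[A^{\dag}\,\omega^{1/2} B\, \omega^{1/2}\right].
\end{equation}
The natural strategy is to package $Z \equiv \omega^{1/2} B\, \omega^{1/2}$ as a single trace-class operator, pass the trace through cyclicity to obtain $\operatorname{Tr}[Z A_n^{\dag}]$, and then apply Lemma~\ref{app lemma 2}.3 to the sequence $A_n^{\dag}$.

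There are two small things I would verify explicitly before invoking the lemma. First, I would check that $Z$ really is trace-class: since $\omega$ is a state, $\omega^{1/2}$ is Hilbert--Schmidt; the Hilbert--Schmidt class is a two-sided ideal in the bounded operators, so $B\,\omega^{1/2}$ is Hilbert--Schmidt as well; and the product of two Hilbert--Schmidt operators, here $\omega^{1/2}\cdot(B\,\omega^{1/2})$, is trace-class. Second, I would observe that weak convergence is preserved under taking adjoints, i.e.\ $A_n \xrightarrow{w} A$ implies $A_n^{\dag} \xrightarrow{w} A^{\dag}$, which follows from the one-line computation
\begin{equation}
\langle \alpha, A_n^{\dag}\beta\rangle = \overline{\langle \beta, A_n \alpha\rangle} \xrightarrow[n\to\infty]{} \overline{\langle \beta, A\alpha\rangle} = \langle \alpha, A^{\dag}\beta\rangle
\end{equation}
for arbitrary $\alpha,\beta\in\mathcal{H}$.

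With these two facts in hand, I would conclude as follows: using cyclicity of the trace, $\langle A_n, B\rangle_{\omega} = \operatorname{Tr}[A_n^{\dag} Z] = \operatorname{Tr}[Z A_n^{\dag}]$, and since $A_n^{\dag}$ converges weakly to $A^{\dag}$ and $Z$ is trace-class, condition~3 of Lemma~\ref{app lemma 2} yields $\operatorname{Tr}[Z A_n^{\dag}] \to \operatorname{Tr}[Z A^{\dag}] = \langle A, B\rangle_{\omega}$, which is exactly the claim.

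The only place that requires genuine care—and thus the main ``obstacle'', though a mild one—is the trace-class verification for $Z$, because one must not be tempted to write $\operatorname{Tr}[A_n^{\dag} \omega^{1/2}] \to \operatorname{Tr}[A^{\dag} \omega^{1/2}]$ directly and then invoke continuity of multiplication by $B$ on the left, which would not be justified in the weak operator topology. Grouping $\omega^{1/2} B\, \omega^{1/2}$ together as one trace-class operator sidesteps this issue and allows Lemma~\ref{app lemma 2}.3 to do all the work.
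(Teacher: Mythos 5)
Your proof is correct and follows essentially the same route as the paper, which simply notes that $\omega^{1/2}B\omega^{1/2}$ is trace-class and invokes condition~3 of Lemma~\ref{app lemma 2}. You additionally make explicit two details the paper glosses over---the Hilbert--Schmidt argument showing $\omega^{1/2}B\omega^{1/2}$ is trace-class, and the fact that $A_n\xrightarrow{w}A$ implies $A_n^{\dag}\xrightarrow{w}A^{\dag}$ (needed because the weighted inner product involves $A_n^{\dag}$)---both of which are accurate and worth spelling out.
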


\begin{proof}
It suffices to note that $\omega^{1/2}B\omega^{1/2}$ is a trace-class
operator. Applying condition 3 of Lemma~\ref{app lemma 2} yields the statement.
\end{proof}

\begin{theorem}
If the Petz equation in~\eqref{Petz eq} is satisfied for all finite-rank
operators $A,B$, then the same is true for all bounded $A,B$.
\end{theorem}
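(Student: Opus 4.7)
My plan is to approximate arbitrary bounded operators $A, B$ by the finite-rank operators $A_n := \Pi_n A \Pi_n$ and $B_n := \Pi_n B \Pi_n$, where $\Pi_n$ is the orthogonal projector onto the first $n$ basis vectors of a fixed orthonormal basis. As already noted in the excerpt, $A_n \xrightarrow{w} A$ and $B_n \xrightarrow{w} B$. Invoking the hypothesis on each pair yields
\begin{equation*}
\langle A_n, \mathcal{N}^{\dag}(B_n)\rangle_{\sigma} \,=\, \langle \mathcal{P}^{\dag}(A_n), B_n\rangle_{\mathcal{N}(\sigma)},
\end{equation*}
and the task reduces to showing that both sides converge to the corresponding expressions with $A,B$ in place of $A_n,B_n$, after which one may pass to the limit.

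The principal obstacle is that both arguments of the weighted inner product vary simultaneously, so one cannot directly invoke Corollary~\ref{app lemma 4}. I will circumvent this by splitting
\begin{equation*}
\langle A_n, \mathcal{N}^{\dag}(B_n)\rangle_{\sigma} - \langle A, \mathcal{N}^{\dag}(B)\rangle_{\sigma} \,=\, \langle A_n - A,\, \mathcal{N}^{\dag}(B_n)\rangle_{\sigma} + \langle A,\, \mathcal{N}^{\dag}(B_n - B)\rangle_{\sigma},
\end{equation*}
and handling the two pieces separately. The second piece vanishes in the limit by the combination of Corollary~\ref{app lemma 3}, which gives $\mathcal{N}^{\dag}(B_n - B) \xrightarrow{w} 0$, and Corollary~\ref{app lemma 4} applied after exchanging the two slots via the conjugate-symmetry $\langle X, Y\rangle_{\sigma} = \overline{\langle Y, X\rangle_{\sigma}}$ of the weighted Hilbert--Schmidt inner product.

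The harder piece is $\langle A_n - A, \mathcal{N}^{\dag}(B_n)\rangle_{\sigma}$, where I expect the main technical step. The key observation is that $A_n^{\dag} = \Pi_n A^{\dag} \Pi_n$ converges to $A^{\dag}$ not merely weakly but \emph{strongly}, because $\Pi_n \to I$ strongly and the $\Pi_n$ are uniformly bounded in operator norm. Combined with the fact that $\sigma^{1/2}$ is a Hilbert--Schmidt operator (since $\operatorname{Tr}[\sigma]=1$), a dominated-convergence argument in the eigenbasis of $\sigma$ with eigenvalues $(\lambda_i)_i$ shows
\begin{equation*}
\|(A_n - A)^{\dag} \sigma^{1/2}\|_2^2 \,=\, \sum_i \lambda_i \,\|(A_n - A)^{\dag} |i\rangle\|^2 \,\longrightarrow\, 0,
\end{equation*}
each summand tending to zero strongly and being dominated by $\bigl(\|A\|_\infty + \sup_n \|A_n\|_\infty\bigr)^2 \lambda_i$. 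The H\"older inequality $\|XY\|_1 \leq \|X\|_2 \|Y\|_2$ then gives $\|\sigma^{1/2}(A_n - A)^{\dag}\sigma^{1/2}\|_1 \to 0$. Since $(\|\mathcal{N}^{\dag}(B_n)\|_\infty)_n$ is bounded by Lemma~\ref{app lemma 1}, the trace-norm/operator-norm H\"older inequality finally yields $|\langle A_n - A, \mathcal{N}^{\dag}(B_n)\rangle_{\sigma}| \to 0$.

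The identical argument, with the roles of $\mathcal{N}$ and $\mathcal{P}$ and of $\sigma$ and $\mathcal{N}(\sigma)$ interchanged (using that $\mathcal{P}$ is a quantum channel, so Corollaries~\ref{app lemma 3} and~\ref{app lemma 4} apply equally to $\mathcal{P}^{\dag}$), shows that the right-hand side of the Petz equation converges to $\langle \mathcal{P}^{\dag}(A), B\rangle_{\mathcal{N}(\sigma)}$. Taking $n \to \infty$ in the Petz equation for the pair $(A_n, B_n)$ then gives the desired identity for arbitrary bounded $A, B$.
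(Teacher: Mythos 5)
Your proof is correct, but it takes a genuinely different and more laborious route than the paper's. The paper decouples the two approximations by introducing a \emph{doubly indexed} family: it fixes finite-rank sequences $A_n\xrightarrow{w}A$ and $B_m\xrightarrow{w}B$ with independent indices, writes the Petz equation for each pair $(n,m)$, and takes iterated limits --- first $m\to\infty$ at fixed $n$, then $n\to\infty$. At each stage only one slot of the weighted inner product varies, so Corollaries~\ref{app lemma 3} and~\ref{app lemma 4} apply directly and the ``both arguments move at once'' obstacle that you identify never arises. By tying both approximations to a single index you are forced to confront that obstacle, and you resolve it with a genuinely quantitative estimate (strong convergence of $\Pi_n A\Pi_n\to A$, dominated convergence against the summable eigenvalues of $\sigma$, and H\"older in Schatten norms) showing $\Vert\sigma^{1/2}(A_n-A)^{\dag}\sigma^{1/2}\Vert_1\to 0$; this is valid and in fact proves somewhat more (joint continuity along the diagonal) at the cost of extra machinery. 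One caution on your final paragraph: the ``identical argument with roles interchanged'' for the right-hand side should be routed through the conjugate symmetry $\langle X,Y\rangle_{\omega}=\overline{\langle Y,X\rangle_{\omega}}$ so that the strong-convergence/trace-norm estimate is applied to the raw difference $B_n-B$ paired against the uniformly bounded $\mathcal{P}^{\dag}(A_n)$, rather than to $\mathcal{P}^{\dag}(A_n-A)$ paired against $B_n$ --- strong convergence is not preserved by $\mathcal{P}^{\dag}$ without an additional Kadison--Schwarz argument, whereas the rerouted version needs nothing new. With that reading, your argument closes; but the paper's iterated-limit device renders all of this unnecessary.
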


\begin{proof}
For bounded $A,B$, consider sequences of finite-rank operators $(A_{n}%
)_{n\in\mathds{N}}$ and $(B_{m})_{m\in\mathds{N}}$ such that $A_{n}%
\xrightarrow{w}A$ and $B_{m}\xrightarrow{w}B$. Then we have
\begin{equation}
\langle A_{n},\mathcal{N}^{\dag}(B_{m})\rangle_{\sigma}\,=\,\langle
\mathcal{P}^{\dag}(A_{n}),B_{m}\rangle_{\mathcal{N}(\sigma)}\,.
\end{equation}
Since Corollary~\ref{app lemma 3} implies that $\mathcal{N}^{\dag}%
(B_{m})\xrightarrow{w}\mathcal{N}^{\dag}(B)$, we can safely use
Corollary~\ref{app lemma 4} to take the limit $m\rightarrow\infty$ on both
sides, which yields
\begin{equation}
\langle A_{n},\mathcal{N}^{\dag}(B)\rangle_{\sigma}\,=\,\langle\mathcal{P}%
^{\dag}(A_{n}),B\rangle_{\mathcal{N}(\sigma)}\,.
\end{equation}
With the same argument we can now take the limit $n\rightarrow\infty$, and
this concludes the proof.
\end{proof}

\section{Conclusion}

\label{sec:conclusion}The main result of this paper is Theorem~\ref{thm:main},
which establishes an explicit form for the Petz map as a bosonic Gaussian
channel whenever the state $\sigma$ and the channel $\mathcal{N}$ are bosonic
Gaussian. Our proof approach was first to consider three ansatzes in order to
arrive at a hypothesis for the Gaussian form of the Petz map. These ansatzes
included 1)\ working with the form of the Petz map in~\eqref{Petz} in spite of
the fact that $\left[  \mathcal{N}(\sigma)\right]  ^{-1}$ is an unbounded
operator, 2)\ negating the covariance matrix of the Gaussian state $\sigma$ if
$\sigma$ is inverted, and 3)\ assuming that the $X$ matrix in
\eqref{eq:Gaussian-channel-char-func}, corresponding to a Gaussian channel, is
invertible. After deducing a hypothesis for an explicit form, we proved that
this hypothesis is in fact correct, by demonstrating that the Gaussian Petz
channel satisfies the equations in~\eqref{Petz eq}\ for all bounded operators
$A$ and $B$. Additionally, our Appendix~\ref{golden}, building on
\cite[Equation (30)]{Balian1969}, offers a powerful tool for computing
products of exponentials of inhomogeneous quadratic Hamiltonians. We suspect
that the ideas and tools presented in this paper will be useful for making
future progress in Gaussian quantum information.

As an immediate application of our results, we can consider whether the strongest form of the conjecture in~\eqref{conj VV} holds for all Gaussian states and channels, with the recovery channel taken to be the Petz recovery map. This question stems from an intuition that these Gaussian objects behave somewhat more classically than arbitrary quantum states or channels, and it is known that the conjecture in~\eqref{conj VV} holds for classical probability distributions. However, this intuition turns out to be fallacious: simple numerical searches yield plenty of counterexamples. A Mathematica file to generate and check such counterexamples is included in our arXiv post \cite{LDW17}. We stress that these numerical tests have been made possible by the fact that we exhibited an explicit formula for the action of the Petz recovery map.

In future work, it would be interesting to determine whether the following
inequality, considered in~\cite{BSW14,SBW14}, could be satisfied whenever all
of the objects involved are Gaussian:%
\begin{equation}
D(\rho\Vert\sigma)\geq D(\mathcal{N}(\rho)\Vert\mathcal{N}(\sigma))-\log
F(\rho,(\mathcal{P}_{\sigma,\mathcal{N}}\circ\mathcal{N})(\rho)).
\end{equation}
More generally, one could consider the various inequalities proposed in
\cite{BSW15a} for the Gaussian case.

\textit{Note}: Our results originally appeared on the arXiv as \cite{LDW17}. We remark here that another work presented the Gaussian Petz recovery map built from zero-mean states and channels \cite{Beny2017}, by making use of methods discussed in \cite{Beny17a}.

We thank Gerardo Adesso, Prabha Mandayam, Alessio Serafini, Kaushik
Seshadreesan, and Andreas Winter for discussions related to this paper.
Furthermore, LL thanks Davide Orsucci for his contribution to the proof
contained in Section~\ref{sec rigorous}. LL acknowledges financial support
from the European Research Council (AdG IRQUAT No. 267386), the Spanish
MINECO (Project no. FIS2013-40627-P and no. FIS2016-86681-P), and the Generalitat de Catalunya (CIRIT
Project no. 2014 SGR 966). SD acknowledges support from the Economic
Development Assistantship of Louisiana\ State University. MMW\ acknowledges
support from the National Science Foundation under Award No.~1714215.

\appendix

\section{Golden rule to handle exponentials of inhomogeneous quadratic
Hamiltonians}

\label{golden}

Very often in quantum optics one has to manipulate products of exponentials of
(inhomogeneous) quadratic Hamiltonians, i.e., operators of the following
form:
\begin{equation}
\mathcal{H} = \frac{i}{2} r^{T}\Omega X r + is^{T}\Omega r +\frac{i}{2} a .
\label{q hamilt}%
\end{equation}
For instance, a typical task consists in turning such a product into a single
exponential of another quadratic operator of the same form. In the above
equation, $r=(x_{1},\ldots, x_{n},p_{1},\ldots, p_{n})^{T}$ denotes the column
vector of canonical coordinates, $[r,r^{T}]=i\Omega$ and $\Omega X$ can be
assumed to be symmetric. Within the context of quantum optics, several methods
have been developed to deal with such calculations, which can be very involved
otherwise. In particular, a general formula for converting product of
exponentials of quadratic operators into a single exponential has been found
in~\cite[Equation (30)]{Balian1969}.

The main idea behind the approach we discuss here is not particularly novel
and has been already successfully exploited in quantum optics. For a thorough
review with many examples, we refer the reader to~\cite[Chapter 2]%
{puri2001mathematical}. However, the particular example we present does not
seem to have been considered before, and we believe it is of practical
importance to make the kind of computations we performed in this paper much
easier and more intuitive. To demonstrate the convenience of our method, we
conclude this appendix with an alternative proof of Lemma
\ref{lemma sqrt sandwich}.

The starting point is the observation that quadratic operators of the form
\eqref{q hamilt} form a Lie algebra, a fact which is easily seen to be a
consequence of the canonical commutation relations~\eqref{CCR}. Namely, it is
easy to see that
\begin{align}
\left[  \frac{i}{2} r^{T}\Omega X r + i s^{T} \Omega r + \frac{i}{2}
a,\ \frac{i}{2} r^{T}\Omega Y r + i t^{T} \Omega r + \frac{i}{2} b \right]  =
\frac{i}{2} r^{T} \Omega[X,Y] r + i (Xt - Ys)^{T} \Omega r - i s^{T} \Omega t
. \label{comm H}%
\end{align}
As is well-known, given $\mathcal{H}_{1},\mathcal{H}_{2}$ of the form
\eqref{q hamilt}, the operator $\mathcal{H}_{3}$ satisfying
\begin{equation}
e^{\mathcal{H}_{1}} e^{\mathcal{H}_{2}}\, =\, e^{\mathcal{H}_{3}}%
\end{equation}
depends only on the Lie algebra generated by $\mathcal{H}_{1}$ and
$\mathcal{H}_{2}$. Therefore, if we could construct an isomorphism turning the
Lie algebra of quadratic Hamiltonians into a (low-dimensional) matrix algebra,
we would be able to compute $\mathcal{H}_{3}$ as follows:

\begin{itemize}
\item[(i)] associate matrices $M_{1},M_{2}$ to $\mathcal{H}_{1},\mathcal{H}%
_{2}$ through the above isomorphism;

\item[(ii)] compute the Lie algebra element $M_{3}$ such that $e^{M_{1}}
e^{M_{2}}=e^{M_{3}}$;

\item[(iii)] use one last time the isomorphism to translate $M_{3}$ back to a
quadratic Hamiltonian $\mathcal{H}_{3}$.
\end{itemize}

It turns out that such an isomorphism can be found. An explicit example is as
follows:
\begin{equation}
\frac{i}{2} r^{T}\Omega X r + is^{T}\Omega r +\frac{i}{2} a\quad
\longleftrightarrow\quad%
\begin{pmatrix}
0 & s^{T} \Omega^{T} & a\\
0 & X & s\\
0 & 0 & 0
\end{pmatrix}
. \label{isomorphism}%
\end{equation}
The matrix Lie algebra we will be concerned about is thus formed by matrices
of the above form, with the only restriction that $\Omega X$ is symmetric. As
expected, the commutator between two such matrices takes the form
\begin{equation}
\left[
\begin{pmatrix}
0 & s^{T} \Omega^{T} & a\\
0 & X & s\\
0 & 0 & 0
\end{pmatrix}
,\
\begin{pmatrix}
0 & t^{T} \Omega^{T} & b\\
0 & Y & s\\
0 & 0 & 0
\end{pmatrix}
\right]  \ =\
\begin{pmatrix}
0 & (Xt-Ys)^{T} \Omega^{T} & - 2 s^{T}\Omega t\\
0 & [X,Y] & Xt-Ys\\
0 & 0 & 0
\end{pmatrix}
,
\end{equation}
mimicking~\eqref{comm H}. In order to apply our strategy, we need to compute
the exponential of a matrix belonging to our Lie algebra. It is an elementary
exercise to show that
\begin{equation}
\label{exp Lie}\exp\left[
\begin{pmatrix}
0 & s^{T} \Omega^{T} & a\\
0 & X & s\\
0 & 0 & 0
\end{pmatrix}
\right]  \ =\
\begin{pmatrix}
1 & \left(  \frac{I-e^{-X}}{X}\,s\right)  ^{T} \Omega^{T} & a + s^{T}
\Omega\frac{X-\sinh X}{X^{2}}\, s\\
0 & e^{X} & \frac{e^{X}-I}{X}\, s\\
0 & 0 & 1
\end{pmatrix}
.\\[1ex]
\end{equation}

We conclude this appendix by presenting an alternative and perhaps more
intuitive derivation of Lemma~\ref{lemma sqrt sandwich} that makes use of the
Lie algebra isomorphism~\eqref{isomorphism}. The advantage of this proof is
basically that it turns the cumbersome sequence of Gaussian integrals we
performed in the main body into a sequence of $3\times3$ block-matrix multiplications.

\vspace{2ex}

\begin{proof}
[Alternative proof of Lemma~\ref{lemma sqrt sandwich}]As a preliminary step,
we deduce from~\eqref{V coth} the expression for $\sinh\left(  \frac{i\Omega
H_{\sigma}}{2}\right)  $. Since $i\Omega H_{\sigma}$ has real eigenvalues, we
can apply the identity $\sinh x = \frac{\coth(x)^{-1}}{\sqrt{1-\coth(x)^{2}}}$
(valid for real $x$) and~\eqref{V coth} to obtain
\begin{equation}
\sinh\left(  \frac{i\Omega H_{\sigma}}{2}\right)  = \frac{(iV\Omega)^{-1}%
}{\sqrt{I + (V\Omega)^{-2}}} . \label{sinh (i Omega H / 2)}%
\end{equation}

Now, let us show how to compute $\sqrt{\sigma_{0}}D_{x}\sqrt{\sigma_{0}}$ for
any given $x$. We can employ the exponential form of $\sigma_{0}$ as given in
\eqref{eq:exp-form-Gaussian}, which in our case becomes $\sigma_{0}=Z_{\sigma
}e^{-\frac{1}{2}r^{T}H_{\sigma}r}$. For the sake of simplicity, we ignore the
normalisation constant $Z_{\sigma}$ for the moment. Also, let us omit the
subscripts $\sigma$ throughout the calculation. We find
\begin{align}
\sqrt{\sigma_{0}}D_{x}\sqrt{\sigma_{0}} &  \propto e^{\frac{i}{4}r^{T}%
\Omega(-i\Omega H)r}e^{ix^{T}\Omega r}e^{\frac{i}{4}r^{T}\Omega(-i\Omega
H)r}\\
&  \overset{\mathclap{\scriptsize \mbox{(i)}}}{\longrightarrow}\exp%
\begin{pmatrix}
0 & 0 & 0\\
0 & -i\Omega H/2 & 0\\
0 & 0 & 0
\end{pmatrix}
\exp%
\begin{pmatrix}
0 & x^{T}\Omega^{T} & 0\\
0 & 0 & x\\
0 & 0 & 0
\end{pmatrix}
\exp%
\begin{pmatrix}
0 & 0 & 0\\
0 & -i\Omega H/2 & 0\\
0 & 0 & 0
\end{pmatrix}
\\
&  \overset{\mathclap{\scriptsize \mbox{(ii)}}}{=}%
\begin{pmatrix}
1 & 0 & 0\\
0 & e^{-i\Omega H/2} & 0\\
0 & 0 & 1
\end{pmatrix}%
\begin{pmatrix}
1 & x^{T}\Omega^{T} & 0\\
0 & I & x\\
0 & 0 & 1
\end{pmatrix}%
\begin{pmatrix}
1 & 0 & 0\\
0 & e^{-i\Omega H/2} & 0\\
0 & 0 & 1
\end{pmatrix}
\\
&  =%
\begin{pmatrix}
1 & x^{T}\Omega^{T}e^{-i\Omega H/2} & 0\\
0 & e^{-i\Omega H} & e^{-i\Omega H/2}x\\
0 & 0 & 1
\end{pmatrix}
\end{align}%
\begin{align}
&  \overset{\mathclap{\scriptsize \mbox{(iii)}}}{=}%
\begin{pmatrix}
1 & \left(  \frac{1}{2\sinh\left(  \frac{i\Omega H}{2}\right)  }x\right)
^{T}\Omega^{T} & 0\\
0 & I & \frac{1}{2\sinh\left(  \frac{i\Omega H}{2}\right)  }x\\
0 & 0 & 1
\end{pmatrix}%
\begin{pmatrix}
1 & 0 & \frac{1}{4}x^{T}\Omega\frac{e^{-i\Omega H}}{\sinh\left(  \frac{i\Omega
H}{2}\right)  ^{2}}x\\
0 & e^{-i\Omega H} & 0\\
0 & 0 & 1
\end{pmatrix}
\nonumber\\
&  \qquad\qquad\times%
\begin{pmatrix}
1 & \left(  -\frac{1}{2\sinh\left(  \frac{i\Omega H}{2}\right)  }x\right)
^{T}\Omega^{T} & 0\\
0 & I & -\frac{1}{2\sinh\left(  \frac{i\Omega H}{2}\right)  }x\\
0 & 0 & 1
\end{pmatrix}
\\
&  \overset{\mathclap{\scriptsize \mbox{(iv)}}}{\longrightarrow}\exp\!\left(
i\left(  \frac{1}{2\sinh\left(  \frac{i\Omega H}{2}\right)  }x\right)
^{T}\Omega r\right)  \exp\!\left(  \frac{i}{2}r^{T}\Omega(-i\Omega
H)r+\frac{i}{8}x^{T}\Omega\frac{e^{-i\Omega H}}{\sinh\left(  \frac{i\Omega
H}{2}\right)  ^{2}}x\right)  \nonumber\\
&  \qquad\qquad\times\exp\!\left(  i\left(  -\frac{1}{2\sinh\left(
\frac{i\Omega H}{2}\right)  }x\right)  ^{T}\Omega r\right)
\end{align}%
\begin{align}
&  =\exp\!\left(  \frac{i}{8}x^{T}\Omega\frac{e^{-i\Omega H}}{\sinh\left(
\frac{i\Omega H}{2}\right)  ^{2}}x\right)  D_{\frac{1}{2\sinh\left(
\frac{i\Omega H}{2}\right)  }x}\ e^{-\frac{1}{2}r^{T}Hr}\ D_{-\frac{1}%
{2\sinh\left(  \frac{i\Omega H}{2}\right)  }x}\\
&  \overset{\mathclap{\scriptsize \mbox{(v)}}}{=}\exp\!\left(  -\frac{i}%
{8}x^{T}\Omega\frac{\sinh(i\Omega H)}{\sinh\left(  \frac{i\Omega H}{2}\right)
^{2}}x\right)  D_{\frac{1}{2\sinh\left(  \frac{i\Omega H}{2}\right)  }%
x}\ e^{-\frac{1}{2}r^{T}Hr}\ D_{-\frac{1}{2\sinh\left(  \frac{i\Omega H}%
{2}\right)  }x}\\
&  \overset{\mathclap{\scriptsize \mbox{(vi)}}}{=}\exp\!\left(  -\frac{1}%
{4}x^{T}\Omega^{T}V\Omega x\right)  D_{\frac{i}{2}\sqrt{I+(V\Omega)^{-2}%
}V\Omega x}\ e^{-\frac{1}{2}r^{T}Hr}\ D_{-\frac{i}{2}\sqrt{I+(V\Omega)^{-2}%
}V\Omega x}.
\end{align}
The justification of these steps is as follows: (i) forward application of the
isomorphism~\eqref{isomorphism}; (ii) exponential formula~\eqref{exp Lie};
(iii) direct verification; (iv) backward application of the isomorphism
\eqref{isomorphism}; (v) we use $x^{T}Ax=\frac{1}{2}x^{T}(A+A^{T})x$ to
symmetrise the matrix inside the first exponential; (vi) we employ
\eqref{sinh (i Omega H / 2)}, the hyperbolic trigonometric identity
$\frac{\sinh y}{\sinh(y/2)^{2}}=2\coth(y/2)$ and~\eqref{V coth}. Once we
reintroduce the normalisation $Z_{\sigma}$, the above calculation shows that
\begin{equation}
\sqrt{\sigma_{0}}D_{x}\sqrt{\sigma_{0}}=e^{-\frac{1}{4}x^{T}\Omega
^{T}V_{\sigma}\Omega x}D_{\frac{i}{2}\sqrt{I+(V_{\sigma}\Omega)^{-2}}%
V_{\sigma}\Omega x}\ \sigma_{0}\ D_{-\frac{i}{2}\sqrt{I+(V_{\sigma}%
\Omega)^{-2}}V_{\sigma}\Omega x}.
\end{equation}
Then, using~\eqref{comp displacement} and~\eqref{char funct G} we see that
\begin{align}
\chi_{\sqrt{\sigma_{0}}D_{x}\sqrt{\sigma_{0}}}(y) &  =\operatorname{Tr}%
[D_{-y}\sqrt{\sigma_{0}}D_{x}\sqrt{\sigma_{0}}]\\
&  =e^{-\frac{1}{4}x^{T}\Omega^{T}V_{\sigma}\Omega x}\operatorname{Tr}\left[
D_{-y}D_{\frac{i}{2}\sqrt{I+(V_{\sigma}\Omega)^{-2}}V_{\sigma}\Omega x}%
\sigma_{0}D_{-\frac{i}{2}\sqrt{I+(V_{\sigma}\Omega)^{-2}}V_{\sigma}\Omega
x}\right]  \\
&  =e^{-\frac{1}{4}x^{T}\Omega^{T}V_{\sigma}\Omega x}\operatorname{Tr}\left[
D_{-\frac{i}{2}\sqrt{I+(V_{\sigma}\Omega)^{-2}}V_{\sigma}\Omega x}%
D_{-y}D_{\frac{i}{2}\sqrt{I+(V_{\sigma}\Omega)^{-2}}V_{\sigma}\Omega x}%
\sigma_{0}\right]  \\
&  =e^{-\frac{1}{4}x^{T}\Omega^{T}V_{\sigma}\Omega x}e^{\frac{1}{2}x^{T}%
\Omega^{T}\sqrt{I+(V_{\sigma}\Omega)^{-2}}V_{\sigma}\Omega y}\operatorname{Tr}%
\left[  D_{-y}\sigma_{0}\right]  \\
&  =e^{-\frac{1}{4}x^{T}\Omega^{T}V_{\sigma}\Omega x}e^{\frac{1}{2}x^{T}%
\Omega^{T}\sqrt{I+(V_{\sigma}\Omega)^{-2}}V_{\sigma}\Omega y}\chi_{\sigma_{0}%
}(y)\\
&  =\exp\!\left(  -\frac{1}{4}x^{T}\Omega^{T}V_{\sigma}\Omega x+\frac{1}%
{2}x^{T}\Omega^{T}\sqrt{I+(V_{\sigma}\Omega)^{-2}}V_{\sigma}\Omega y-\frac
{1}{4}y^{T}\Omega^{T}V_{\sigma}\Omega y\right)  ,
\end{align}
which concludes the proof.
\end{proof}

\section{Verifying complete positivity of the Gaussian Petz channel}

\label{sec:Petz-CP}One might want to verify explicitly the complete positivity
condition for the Petz map stated in Theorem~\ref{thm:main}, even if we know
from~\cite{petz86,petz88,PETZ} that~\eqref{Petz} has to be completely positive
by construction. Recall that a Gaussian channel defined by~\eqref{N cm} is
completely positive if and only if the inequality~\eqref{G cp condition} is met: i.e., if and only if ~\cite{adesso14,S17}
\begin{equation}
Y+i\Omega-iX\Omega X^{T}\geq0.\label{CP}
\end{equation}
We start with the following lemma:

\begin{lemma}
For all $V$ such that $V+i\Omega>0$, the following identity holds
\begin{equation}
\left(  I+ (\Omega V^{-2} \right)  ^{-1/2} V^{-1} (V+i\Omega) V^{-1} \left(
I+ (V\Omega)^{-2} \right)  ^{-1/2}\, =\, \frac{1}{V-i\Omega}.
\label{inversion identity}%
\end{equation}

\end{lemma}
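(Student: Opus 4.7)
My plan is to reduce the identity to two short algebraic facts and then telescope. Introduce the shorthand $M\equiv I+(V\Omega)^{-2}$ and $N\equiv I+(\Omega V)^{-2}$, so that the claim reads
\[
N^{-1/2}\,V^{-1}(V+i\Omega)V^{-1}\,M^{-1/2}\,=\,(V-i\Omega)^{-1}.
\]
Two facts to get along the way: a \emph{factorization} of the middle piece $V^{-1}(V+i\Omega)V^{-1}$ and an \emph{intertwining} relation between the two dressings $M$ and $N$. Throughout I will freely use $\Omega^{-1}=-\Omega$ and $\Omega^{2}=-I$.

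First I would establish
\[
V^{-1}(V+i\Omega)V^{-1}\,=\,(V-i\Omega)^{-1}\,M,
\]
equivalently $(V-i\Omega)V^{-1}(V+i\Omega)V^{-1}=M$. This is immediate: expanding yields $(I-i\Omega V^{-1})(V+i\Omega)V^{-1}=(V+\Omega V^{-1}\Omega)V^{-1}=I+\Omega V^{-1}\Omega V^{-1}$, and since $(V\Omega)^{-1}=\Omega^{-1}V^{-1}=-\Omega V^{-1}$, we have $\Omega V^{-1}\Omega V^{-1}=(V\Omega)^{-2}$.

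Next I would verify the intertwining
\[
(V-i\Omega)\,N\,(V-i\Omega)^{-1}\,=\,M,
\]
equivalently $(V-i\Omega)(\Omega V)^{-2}=(V\Omega)^{-2}(V-i\Omega)$. Using $(\Omega V)^{-2}=V^{-1}\Omega V^{-1}\Omega$ and $(V\Omega)^{-2}=\Omega V^{-1}\Omega V^{-1}$, both sides expand to $\Omega V^{-1}\Omega-i\,\Omega V^{-1}\Omega V^{-1}\Omega$, so the relation holds. Under the faithfulness assumption $V+i\Omega>0$ the matrix $V\Omega$ is similar (via $V^{1/2}$) to the antisymmetric matrix $V^{1/2}\Omega V^{1/2}$, whose eigenvalues are purely imaginary of modulus $\geq 1$; hence both $M$ and $N$ are similar to positive semidefinite matrices with spectrum in $[0,1)$, and their inverse square roots are unambiguously defined by the positive branch. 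By functional calculus applied to the similarity above,
\[
N^{-1/2}\,=\,(V-i\Omega)^{-1}\,M^{-1/2}\,(V-i\Omega).
\]

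Finally I would substitute the two displayed identities into the left-hand side and telescope:
\[
N^{-1/2}\,V^{-1}(V+i\Omega)V^{-1}\,M^{-1/2}=(V-i\Omega)^{-1}M^{-1/2}(V-i\Omega)\cdot(V-i\Omega)^{-1}M\cdot M^{-1/2}=(V-i\Omega)^{-1}.
\]
The only nontrivial point is justifying the functional calculus step for the non-Hermitian matrices $M,N$; the rest is two or three lines of direct matrix algebra using $\Omega^{-1}=-\Omega$. I expect the similarity-plus-positivity argument for the square roots to be the subtlest moment, though it is essentially standard once one notices that $V^{1/2}\Omega V^{1/2}$ is antisymmetric.
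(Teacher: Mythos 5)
Your proof is correct, and it takes a genuinely different route from the paper's. The paper dispatches the identity by passing to the Williamson normal form $V=S(D\oplus D)S^{T}$, which simultaneously diagonalizes every factor and makes the square roots manifestly well defined, at the cost of invoking the symplectic diagonalization theorem and tracking how each piece transforms under $S$. You instead prove two coordinate-free algebraic facts — the factorization $(V-i\Omega)V^{-1}(V+i\Omega)V^{-1}=I+(V\Omega)^{-2}$ and the intertwining $(V-i\Omega)\bigl(I+(\Omega V)^{-2}\bigr)(V-i\Omega)^{-1}=I+(V\Omega)^{-2}$, both of which I have checked and which follow from $\Omega^{-1}=-\Omega$ — and then telescope. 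The one genuinely delicate point in your route, as you correctly identify, is transporting the similarity through the inverse square root of the non-Hermitian matrices $M$ and $N$; your argument via conjugation by $V^{1/2}$ to the antisymmetric matrix $V^{1/2}\Omega V^{1/2}$ handles this, since $M$ and $N$ are then seen to be diagonalizable with real eigenvalues $1-1/d_j^{2}$, and functional calculus commutes with similarity on diagonalizable matrices. Two trivial remarks: under the hypothesis $V+i\Omega>0$ the symplectic eigenvalues satisfy $d_j>1$ strictly, so the spectrum of $M$ lies in the open interval $(0,1)$ rather than $[0,1)$ as you wrote — this strictness is exactly what makes $M^{-1/2}$ exist, so it is worth stating precisely; and the displayed statement of the lemma contains a typographical slip (the left factor should read $\left(I+(\Omega V)^{-2}\right)^{-1/2}$), which you correctly repaired in your shorthand. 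Your approach buys a purely algebraic, basis-free proof; the paper's buys brevity by outsourcing all the spectral bookkeeping to Williamson's theorem.
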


\begin{proof}
This is a straightforward calculation after decomposing $V$ in the Williamson
form as $V=S(D\oplus D)S^{T}$, where $S$ is a symplectic matrix satisfying
$S\Omega S^{T}=\Omega$ and $D$ is a diagonal matrix of symplectic eigenvalues
(note that all entries of $D$ are larger than or equal to one).
\end{proof}

\vspace{2ex} With the above result in hand, we can write
\begin{align}
&  Y_{P}+i\Omega-iX_{P}\Omega X_{P}^{T}\,=\,V_{\sigma}+i\Omega-X_{P}\left(
V_{\mathcal{N}(\sigma)}+i\Omega\right)  X_{P}^{T}\,\nonumber\\
&  =\,\left(  I+(V_{\sigma}\Omega)^{-2}\right)  ^{1/2}V_{\sigma}\,\frac
{1}{V_{\sigma}-i\Omega}\,V_{\sigma}\left(  I+(\Omega V)^{-2}\right)
^{1/2}\,\nonumber\\
&  \qquad-\,\left(  I+\left(  V_{\sigma}\Omega\right)  ^{-2}\right)
^{1/2}V_{\sigma}X^{T}\left(  I+\left(  \Omega V_{\mathcal{N}(\sigma)}\right)
^{-2}\right)  ^{-1/2}V_{\mathcal{N}(\sigma)}^{-1}\left(  V_{\mathcal{N}%
(\sigma)}+i\Omega\right)  \nonumber\\
&  \qquad\times V_{\mathcal{N}(\sigma)}^{-1}\left(  I+\left(  V_{\mathcal{N}%
(\sigma)}\Omega\right)  ^{-2}\right)  ^{-1/2}XV_{\sigma}\left(  I+\left(
\Omega V_{\sigma}\right)  ^{-2}\right)  ^{1/2}\,\\
&  =\,\left(  I+(V_{\sigma}\Omega)^{-2}\right)  ^{1/2}V_{\sigma}\,\frac
{1}{V_{\sigma}-i\Omega}\,V_{\sigma}\left(  I+(\Omega V)^{-2}\right)
^{1/2}\,\nonumber\\
&  \qquad-\,\left(  I+\left(  V_{\sigma}\Omega\right)  ^{-2}\right)
^{1/2}V_{\sigma}X^{T}\frac{1}{V_{\mathcal{N}(\sigma)}-i\Omega}\,XV_{\sigma
}\left(  I+\left(  \Omega V_{\sigma}\right)  ^{-2}\right)  ^{1/2}\,\\
&  =\,\left(  I+\left(  V_{\sigma}\Omega\right)  ^{-2}\right)  ^{1/2}%
V_{\sigma}\left(  \frac{1}{V_{\sigma}-i\Omega}\,-\,X^{T}\frac{1}%
{V_{\mathcal{N}(\sigma)}-i\Omega}\,X\right)  V_{\sigma}\left(  I+\left(
\Omega V_{\sigma}\right)  ^{-2}\right)  ^{1/2}\,.\label{CP eq 1}%
\end{align}
Now, from $V_{\mathcal{N}(\sigma)}-i\Omega=XV_{\sigma}X^{T}+Y-i\Omega\geq
X(V_{\sigma}-i\Omega)X^{T}$ we obtain
\begin{equation}
\frac{1}{V_{\sigma}-i\Omega}\,-\,X^{T}\frac{1}{V_{\mathcal{N}(\sigma)}%
-i\Omega}\,X\,\geq\,\frac{1}{V_{\sigma}-i\Omega}\,-\,X^{T}\frac{1}%
{X(V_{\sigma}-i\Omega)X^{T}}\,X\,\geq\,0\,,\label{CP eq 2}%
\end{equation}
as it follows from the inequality $A^{-1}\geq X^{T}(XAX^{T})^{-1}X$, which is
in turn valid for all invertible $A$ and all matrices $X$ with no more rows
than columns and maximum rank. Plugging~\eqref{CP eq 2} into~\eqref{CP eq 1},
we conclude the condition in~\eqref{CP} for $X_{P}$ and $Y_{P}$, as desired.

\bibliographystyle{alpha}
\bibliography{Ref}

\end{document}